\newtheorem{theorem}{Theorem}[section]
\newtheorem{lemma}[theorem]{Lemma}
\newtheorem{observation}[theorem]{Observation}
\newtheorem{definition}[theorem]{Definition}
\newcommand{\ignore}[1]{}
\newcommand{\RR}{{\mathbb R}}
\newcommand{\NN}{V}
\newcommand{\TT}{\cal T}
\newcommand{\TGP}{T_{\GP}}
\newcommand{\TWU}{T_{\WU}}
\newcommand{\EXGP}{{\cal E}_{\GP}}
\newcommand{\EXWU}{{\cal E}_{\WU}}
\DeclareMathOperator{\HGT}{height}
\DeclareMathOperator{\TIME}{time}
\newcommand{\HGP}{h_{GP}}
\newcommand{\HWU}{h_{WU}}
\newcommand{\vecb}{\vec{b}\,}
\newcommand{\vecc}{\vec{c}\,}
\newcommand{\vecd}{\vec{d}\,}
\newcommand{\vece}{\vec{e}\,}
\newcommand{\LL}{\textsf L}
\DeclareMathOperator{\REQ}{Delay}
\DeclareMathOperator{\EVEN}{\textbf{even}}
\DeclareMathOperator{\ODD}{\textbf{odd}}
\DeclareMathOperator{\GP}{GP}
\DeclareMathOperator{\WU}{WU}
\newcommand{\CEIL}[1]{\lceil {#1}\rceil}
\newcommand{\FLOOR}[1]{\lfloor {#1}\rfloor}
\DeclareMathOperator{\uniform}{uniform}
\DeclareMathOperator{\sender}{sender}
\DeclareMathOperator{\target}{target}
\DeclareMathOperator{\DIS}{DIS}
\DeclareMathOperator{\Sec}{Sec.}
\DeclareMathOperator{\PP}{Path}
\DeclareMathOperator{\E}{E}
\newcommand{\N}{\mathbb{N}}
\newcommand{\R}{\mathbb{R}}
\begin{document}

\title{Simple and Optimal Randomized Fault-Tolerant Rumor Spreading}

\author{Benjamin Doerr$^1$,
Carola Doerr$^{2,3}$
Shay Moran$^{4,5}$,
Shlomo Moran$^5$}
\date{
$^1$LIX, \'Ecole Polytechnique, Palaiseau, France\\
$^2$Sorbonne Universit\'es, UPMC Univ Paris 06, UMR 7606, LIP6, F-75005, Paris, France\\
$^3$CNRS, UMR 7606, LIP6, F-75005, Paris, France\\
$^4$Max Planck Institute for Informatics, Saarbr\"ucken, Germany\\
$^5$Computer Science Dept., Technion - Israel Institute of
Technology, Haifa, 32000 Israel\\[2ex]
\today
}

\maketitle

\begin{abstract}
We revisit the classic problem of spreading a piece of information in a group of $n$ fully connected processors. By suitably adding a small dose of randomness to the protocol of Gasienic and Pelc (1996), we derive for the first time protocols that
(i) use a linear number of messages,
(ii) are correct even when an arbitrary number of adversarially chosen processors does not participate in the process, and
(iii)  with high probability have the asymptotically optimal runtime of $O(\log n)$ when at least an arbitrarily small constant fraction of the processors are working.
In addition, our protocols do not require that the system is synchronized nor that all processors are simultaneously woken up at time zero, they are fully based on push-operations, and they do not need an a priori estimate on the number of failed nodes.

Our protocols thus overcome the typical disadvantages of the two known approaches,
algorithms based on random gossip (typically needing a large number of messages due to their unorganized nature) and
algorithms based on fair workload splitting (which are either not  {time-efficient} or require intricate preprocessing steps plus synchronization).
\end{abstract}

\sloppy{
\section{Introduction}
\label{sec:intro}

 Disseminating information to all nodes of a network is one of the basic communication primitives. Basically all collaborative actions in networks imply that some information has to be sent to all nodes, and surprisingly complex tasks like computing aggregates can be reduced to essentially solving a dissemination problem~\cite{MAS08}.
We are interested in disseminating a single piece of information, the \emph{rumor}, to all $n$ nodes in a communication network in which all nodes can exchange information with each other but where individual nodes can initially crash; i.e., they do not participate in the rumor spreading process. More precisely, we study dissemination protocols that are robust against \emph{adversarial initial node crashes}~\cite{TKM89}.

\subsection{Previous Results}

 Rumor spreading protocols that are robust against adversarial node failures have been studied mainly in complete communication networks.
In such networks, essentially two types of fast fault-tolerant rumor spreading protocols have been proposed:
(i) \emph{whispering protocols,} which assume that in every round every informed node passes the information to at most one other nodes,
and (ii) \emph{gossip-based protocols,} which build on the paradigm that nodes call randomly chosen others. The latter, due to their randomized nature, usually are highly robust against all kinds of faults, typically at the price of a higher communication effort and non-trivial termination criteria.

\paragraph{Whispering Protocols:}
It is easy to see that in fault-free communication networks there are protocols disseminating a rumor in $\lceil \log n \rceil$ communication rounds using a total of $n-1$ messages and that both these measures are strictly optimal. A simple protocol for $n = 2^k$ nodes indexed by the numbers from $0$ to $n-1$ would be that in round~$i$, each node~$x$ having the rumor calls node $x$~XOR~$2^{i-1}$ and forwards the message to it. From the sender ID the recipient of a message can infer the round number~$i$, and can thus decide when to stop forwarding the message. Hence this protocol indeed uses only $n-1$ messages in total. This algorithm, like the other ones mentioned in this subsection but in contrast to gossip-based algorithms discussed further below, maintains the {\em whispering property}~\cite{GasPelc96,DiksPelc00}: in each round, the edges along which the rumor is transferred form a matching. It
has two further advantages, namely
(i) it requires no synchronization, in the sense that its correctness does not assume the existence of a global clock, or any restriction on the relative speeds of the processors, and
(ii) nodes know when to stop forwarding the rumor.

The downside of this simple approach is that it is not at all robust. If a node is not available (``crashed''), then all other nodes that would have been informed via it will remain uninformed.
This problem was overcome in the preliminary version of~\cite{GasPelc96},\footnote{The simple workload splitting protocol of Gasieniec and Pelc which we describe and use below appeared only in the preliminary version of~\cite{GasPelc96} and not in the journal paper itself. As indicated in the reference section, this preliminary version is available online at \url{http://citeseerx.ist.psu.edu/viewdoc/summary?doi=10.1.1.49.3838}.} where a protocol is presented that is strictly optimal if no failures occur and which informs all nodes in the network even if an arbitrary number $f$ of nodes do not participate in the collaborative process.
The number of messages sent in all cases is $n-1$. In this result, as in most other fault-tolerant rumor spreading algorithms (but unlike in standard randomized rumor spreading), it is assumed that a node calling a crashed node learns that his call was unsuccessful.

The stability of the algorithm from~\cite{GasPelc96} comes at the cost of an increased runtime: if $f$ nodes are crashed, the protocol may terminate only after $f + \lceil \log (n-f) \rceil$ rounds. That is, the increase of the runtime bound is in the worst case almost equal to the number of crashed nodes. This is certainly unsatisfactory for large $f$, e.g., $f \in \omega(\log n)$.

Subsequent fault-tolerant whispering protocols reduced this running time by adding an {\em opening phase}\footnote{We distinguish between {\em opening phase}, which is repeated each time a rumor is spread, and a {\em preprocessing phase}, which is performed only once, when the network is established.} which precedes the actual spreading of the rumor, and connects a large portion of the non-faulty processors in an appropriate subnetwork.
 The added opening phase comes with a price: (i) it is tailored for a linear lower bound $\alpha n$ on the number of non-faulty processors,
  which needs to be determined by the user in advance; hence, when the number of non-faulty processors is smaller than $\alpha n$, these protocols may run into a deadlock and the rumor is not guaranteed to reach all non-faulty processors; also, the actual running time and number of messages of these protocols are determined by the $\alpha n$ bound rather than by the actual number of faulty processors;
  (ii) unlike the fault-free whispering protocols, it assumes the presence of a global clock and requires synchronization; (iii) it requires that all the non-faulty nodes are simultaneously activated at time 0.

   \cite{GasPelc96} introduces such an opening phase which runs in
   $O(\log ^2 n)$ time.
   A more intricate opening phase was later introduced in~\cite{DiksPelc00};  to the best of our knowledge,~\cite{DiksPelc00} is the only published paper which shows that for any fixed constant $\varepsilon$ there is a protocol which can tolerate up to $\varepsilon n$ node failures, and whose time and message complexities are both asymptotically optimal.
   The algorithm of~\cite{DiksPelc00} requires that a certain, rather complex, virtual expander is stored by the network nodes during the system setup in a preprocessing phase. This expander is used in the opening phase for constructing the paths along which the rumor is communicated. The construction of the expander is based on the explicit expanders of~\cite{RamGraphs88}, and on the properties of these expanders presented in~\cite{Upfal94}.

\paragraph{Gossip-based Protocols:}
In gossip-based communication protocols the nodes of the network are assumed to have access to random bits, which allows them to decide at random which other node to contact. Gossip-based communication algorithms are therefore also called \emph{randomized rumor spreading} protocols. Randomized rumor spreading has been analyzed in various variants for different network topologies. Despite the very simple approach of talking to random neighbors, these protocols often achieve a surprisingly good runtime combined with extreme robustness.
Their main advantages over the fault-tolerant whispering protocol of~\cite{DiksPelc00} is that they avoid the need for an opening phase and for storing intricate subnetworks in a preprocessing time, they tolerate runtime failures (i.e., it may happen that a node works initially but stops working after some time), they do not assume that failures are detected, and usually they also do not assume that processor names are mutually known. On the negative side, they do require asymptotically larger message complexities, and they typically lack a simple termination criterion (i.e., the nodes do not know when every node is guaranteed to have learned the rumor so that they can stop spreading the message).
In this section, we briefly describe the results that are relevant for our work on robustness against adversarial failures in complete graphs.

The first rumor spreading result is due to Frieze and Grimmett~\cite{FG85}, who studied the simple protocol in which each informed node calls in every round a neighbor that is chosen uniformly at random from the list of all its neighbors (\emph{synchronized push-protocol}).
 Pittel~\cite{Pi87} showed that the round complexity of this protocol is $\log n  + \ln n  + h(n)$, where $h(n)$ is any function tending to infinity. Note that randomized rumor spreading in this push-model violates the whispering property, but when counting only messages which carry the rumor (see Section~\ref{subsec:rsp}), this violation can be undone by assuming that nodes accept only one incoming call.

The first to analyze rumor spreading as communication protocol were Demers et al.~\cite{DGH+87}. They studied the problem of maintaining the consistency of replicated databases. In applications like this, where one may assume that updates are to be disseminated frequently, also a push-pull randomized rumor spreading protocol makes sense. Here all nodes and not only those already knowing the rumor call random neighbors, allowing that uninformed nodes ``pull'' information from informed ones.

A possible weakness of protocols using randomized pull operations is the inherent violation of the whispering property: Many uninformed nodes may randomly select the same informed one, thus forcing the selected node to forward the rumor to many neighboring nodes in a single round.\footnote{Using, e.g.,~\cite{BallsBins98}, it is not hard to see that in a push-pull gossip-based algorithm, in each round in which the fraction of informed processors is bounded away from~0 and from~1, with high probability some informed processor is contacted by $\Omega(\log(n) /\log\log n)$ uninformed neighbors each of which wants to receive the rumor from the informed node. This node therefore has to forward the rumor to $\Omega(\log(n) /\log\log n)$ neighbors in one round.}
In models which allow a processor to send at most one message each round, this may result in a considerable increase in running time.

The robustness of randomized rumor spreading comes at a price:
a slightly higher dissemination time when no failures occur, and more importantly  a relatively large number of messages sent until the rumor is disseminated, and a large number of additional messages caused by the fact that in the basic protocol the nodes do not know when to stop sending out messages: In independent randomized rumor spreading in the push-model, only after $\Theta(n \log n)$ messages are sent, the rumor is known to all vertices.

To summarize, randomized gossip based algorithms have asymptotically larger message complexities than optimal deterministic whispering algorithms, but they maintain stronger robustness, and are considerably simpler.

\subsection{Our Results}
The aim of our work is to design simple fault-tolerant rumor spreading protocols with optimal message and time complexities.
We achieve this by adding a natural randomization to the elegant whispering protocol of~\cite{GasPelc96}.
Our protocols inherit from the algorithm in~\cite{GasPelc96} the following properties.
\begin{itemize} \itemsep-0.2em
    \item They do not need to construct and store, in a preprocessing time, an intricate network structure.
    \item  They do not need an opening phase, or simultaneous wakeup of all processors at time 0.
    \item They use only push operations.
     \item They are asynchronous, in the sense that they do not require that the nodes perform actions at the same time or speed.
     \item They have a very simple termination criterion.
      \item They maintain always strictly optimal message complexity; i.e., they use $n-1$ messages to inform all non-crashed nodes in the network.
    \item They are correct for any number of non-faulty processors.
       \item Their  runtimes  are determined by the actual fraction $p$ of the non-faulty processors, and not by a predetermined lower bound on $p$.
\end{itemize}

In particular, our protocols are the first rumor spreading
protocols combining the following three properties:
(i) they disseminate the rumor to all nodes in the network, regardless of how many nodes are crashed by an adversary,
(ii) if an arbitrarily small constant fraction $p$ of nodes (including the starting node) is working correctly, the rumor is spread in logarithmic time, and
(iii) the number of messages sent is linear in $n$ (the protocols actually use strictly minimal $n-1$ messages).

Table~\ref{tab:comparison} compares the properties and advantages of the two whispering protocols from~\cite{DiksPelc00} and~\cite{GasPelc96}, respectively, with those of our new whispering protocols. In this table we state only asymptotic time complexities as there is a trade-off between the exact running time and the probability to achieve this runtime bound. These trade-offs are detailed in the respective theorems. The bit complexity counts the total and maximum number of bits that need to be appended to the original message, respectively.

\begin{table}
\begin{center}
\begin{tabular}{llllllll}
& & \cite{DiksPelc00} & GP~\cite{GasPelc96} &
$\GP_{\uniform}$ &
$\GP_{\Sec~\ref{sec:derandomize}}$ &
$\GP^a_{\Sec~\ref{sec:kwiseind}}$ &
$\GP^b_{\Sec~\ref{sec:kwiseind}}$\\
\hline
\multicolumn{2}{l}{time complexity}
& $O(\log n)$ & {\small $O( n)$} & $O(\log n)$ & $O(\log n)$& $O(\log n)$ & $O(\log n)$\\
\hline
\multicolumn{2}{l}{ message complexity}
& $O(n)$ & $n-1$ & $n-1$ & $n-1$ & $n-1$ & $n-1$ \\
\hline
\multirow{2}{*}{  bit complexity}
& {\small total}
& {\small $O(n \log n)$ }&{\small $O(n \log n)$} & {\small $O(n \log^2 n)$} & {\small $O(n \log n)$ }& {\small $O(n \log^{2}n )$ }& {\small $O(  n \log^{2}(n)  h(n))$} \\
& max
& {\small $O(\log n)$ }& {\small $O(\log n)$} & $O(n)$ & {\small $O(\log n)$} & {\small $O(\log^{2}n )$ }& {\small $O(  \log^{2}(n)  h(n))$} \\
\hline
\multicolumn{2}{l}{preprocessing}
& yes & no & no & yes & no & no\\
\hline
\multicolumn{2}{l}{extra memory}
& yes & no & no & yes & no & no\\
\hline
\multicolumn{2}{l}{opening phases}
& yes & no & no & no & no & no \\
\hline
\multicolumn{2}{l}{  simultaneous wakeup}
& yes & no & no & no & no & no \\
\hline
\multicolumn{2}{l}{global clock}
& yes & no & no & no & no & no\\
\hline
\multicolumn{2}{l}{ dependance on $p$}
& yes & no & no & no & yes & no\\
\hline
\end{tabular}
\end{center}
\caption{Comparison of the fault-tolerant whispering protocols.
The asymptotic time complexities are stated for the case that at least $pn$ nodes are non-faulty, for some constant $p\in(0,1)$, and they hold in the worst case for the deterministic protocols and w.h.p. for the randomized ones.
``dependance on $p$'' indicates that the protocol is determined by the value of $p$; for the two protocols which depend on $p$,  the one in~\cite{DiksPelc00} is guaranteed to be correct only if the number of non-faulty nodes is at least $pn$, and the one in $\GP^a_{\Sec~\ref{sec:kwiseind}}$ is guaranteed to have $O(\log n)$ time complexity only in this case.
In the rightmost column, $h$ can be any slow growing function which tends to infinity.}
\label{tab:comparison}
\end{table}

\paragraph{Analysis of Random Initial Node Failures (Section~\ref{sec:wakeupandcoupling}):}
We first show that for random node crashes, the basic protocol of~\cite{GasPelc96}, denoted $\GP$, has a much better performance than what the worst-case bound in~\cite{GasPelc96} states. In particular, when each node is crashed with constant probability $0<c<1$ independently at random, then with high probability the algorithm terminates within $\Theta(\log n)$ rounds.
We prove this result by first introducing an intermediate failure model, the wakeup model.
We believe that the wakeup model itself is of independent interest.

\paragraph{Simple Randomized Version of the GP Algorithm (Section~\ref{sec:fnm}):}
For adversarial node failures, the robustness of the GP algorithm against random initial node crashes suggests the following straightforward {\em randomized} solution: The node which originally holds the rumor picks a random permutation $\pi$ of the other nodes and initiates the GP protocol with node labels permuted according to $\pi$.
This gives the same time bounds as for random node failures. The downside is that to make the other nodes adopt this strategy, sufficient information on the permutation $\pi$ has to be communicated to the other nodes as well.
This can be achieved by adding in total $O(n \log^2 n)$ bits to all the messages, with at most $n$ bits appended to a single message.

\paragraph{Randomized Version with Logarithmic Message Sizes (Section~\ref{sec:derandomize}):}
 The message sizes can be reduced to $O(\log n)$, which is the messages size in the original GP protocol and the one in~\cite{DiksPelc00}: We prove that instead of choosing the permutation randomly from all permutations, it suffices to choose the permutation uniformly from a set of only $O(n h(n)/\log n)$ {\em random} permutations, where $h \in \omega(1)$ is an arbitrary function tending to infinity. (The number of permutations can be varied to adjust runtimes and failure probabilities, see Theorem~\ref{thm:derandomized} for the details.) This allows to encode the permutation via only $\Theta(\log n)$ bits.
This approach can be implemented by choosing, for an arbitrary function $h \in \omega(1)$, $O(n h(n)/\log n)$ random permutations, communicating them to all nodes in the network, and storing them at all processors. This preprocessing procedure is repeated each time processors join or leave the network. Thus, this protocol is particularly appealing when communication is expensive, memory is cheap, and processors are not added or removed from the network too often.

\paragraph{Preprocessing-Free Version (Section~\ref{sec:kwiseind}):}
While the algorithm from Section~\ref{sec:derandomize} is certainly an efficient way to disseminate rumors in a network whose topology does not change too often and in which messages need to be exchanged frequently, there are many situations in which the required preprocessing phase is undesirable.
 In Section~\ref{sec:kwiseind}, we avoid the preprocessing stage by a natural application of \emph{adaptive $k$-wise almost independent permutations}~\cite{KaplanNR09}.
In a nutshell, this is a probability distribution over a small number of permutations with two properties:
 (i) a permutation chosen randomly from this set has, in the context of our application, almost the same features as a uniform random permutation, and
 (ii) each of the permutations can be computed efficiently from an encoding of $O(k \log n)$ bits.
Consequently, we do not need to communicate or store the permutations, but it suffices to append to each rumor the $O(k \log n)$ bits encoding the  permutation selected by the node initiating the rumor.
The value of $k$ in this algorithm is $O(\log n)$, where the constant of proportionality is determined by the fraction $p$ of non-faulty processors.  We present also a variant of this algorithm in which the value of $k$ does not depend on $p$, in the price of slightly increased asymptotic messages size. These variants are denoted in Table~\ref{tab:comparison} by $\GP^a_{\Sec~\ref{sec:kwiseind}}$ and $\GP^b_{\Sec~\ref{sec:kwiseind}}$, respectively.

\section{Preliminaries}
\label{sec:preliminaries}

Before we present a few basics about rumor spreading protocols, let us briefly fix the notation used throughout this work.
We consider executions of rumor spreading algorithms by $n$ processors whose IDs $(0,1,\ldots,n-1)$ are mutually known, where 0 is the start processor, and where each node can contact each other node.

We use the following notation: For a sequence $s=(s_1,s_2,\ldots )$, $\ODD(s)= (s_1,s_3,\ldots )$ is the subsequence of the odd indexed elements of $s$, and $\EVEN(s)=(s_2,s_4,\ldots)$ is the subsequence of the even-indexed elements of $s$.
 For a binary vector $\vecb$, $|\vecb|_0$ is the number of zeros in $\vecb$, and $|\vecb|_1$ denotes the number of ones in $\vecb$.

 For a rooted tree $T$, 
 $\HGT(T)$ is the height of $T$, i.e., the maximum length of a path from the root to a leaf.

 For $n \in \N$ ($\N$ denotes the positive integers)
 we abbreviate $[n]:=\{1,2,\ldots, n\}$.
 By $S_n$ we denote the set of all permutations of the set $[n]$.

 By $\ln$ we denote the natural logarithm to base $e$. 
 All other logarithms are to base 2.

An event is said to happen with high probability if it happens with probability $1-o(1)$.

 \subsection{Rumor Spreading Protocols}\label{subsec:rsp}

We give a short description of rumor spreading protocols, which aims at formalizing the concepts used in this paper.

Let the undirected graph $G = (V, E)$ describe the underlying communication network, that is, nodes of this graph represent processors and a direct communication between two processors is possible if and only if there is an edge between the corresponding nodes. Let $n := |V|$.

A {\em synchronous} execution of a rumor-spreading algorithm on $G$ consists of rounds $\RR_1,\RR_2,\ldots$. A round $\RR_t$ is initiated by a set of processors $\NN_t\subseteq V$ (the exact nature of $\NN_t$ depends on the model assumed and/or on the specific algorithm): 
  each processor $u\in \NN_t$  sends a  $(u,v)$ {\em communication request} (in short ``$(u,v)$ request'') to one of its neighbors $v$; the request contains a bit informing $v$ whether $u$ holds the rumor already. A $(u,v)$ request is {\em valid} if exactly one of $u$ and $v$ holds the rumor. Each communication  request is received at the same round in which it was sent.
   After receiving all the requests sent to it at $\RR_t$, each processor~$v$ may (but does not have to) {\em approve} some of the valid requests that it has received.  The round $\RR_t$ is then
 completed by transferring the rumor along the edges of the approved requests.\footnote{Some models assume that an informed processor $u$ always sends the rumor
  on the selected edge $(u,v)$, even if $v$ is already informed.}
  The execution \emph{terminates} at time $t$ if $\NN_t\ne\emptyset$ and $\NN_{s}=\emptyset$ for all $s>t$.
 We call $t$ the \emph{time} (or \emph{round}) \emph{complexity} of the execution of the rumor spreading algorithm. Note that in some other works, in particular those on gossip based randomized rumor spreading, only the first time at which all processors know the rumor is regarded---this event can happen much before all nodes stop performing any action.
  In rumor spreading algorithms which assume that all executions are synchronous, the round number can be used by the local programs of the processors  for checking termination etc. In such algorithms the round number  can thus be viewed as a discrete-time global clock.
   In contrast, {\em asynchronous} executions are executions which do not have such a global clock. For the sake of this paper,  asynchronous executions of a rumor spreading protocol  differ from synchronous executions by the following:
   (a) the round numbers are not known to the processors, and
   (b) the processors can have different speeds, which may vary in each round, i.e., a $(u,v)$ communication request sent at round $\RR_t$ can be received by $v$ at round $\RR_{t+s}$, where $s$ can be any nonnegative integer which depends on $u$, $v$, and $t$.\footnote{For a formal definition of asynchronous protocols see, e.g.,~\cite[Chapter 14]{Lynch96}.} Note that if processors use round numbers in their programs, the resulted algorithm may err in asynchronous executions. A rumor spreading protocol is said to be \emph{asynchronous} if it is correct also in asynchronous executions.

Let $E_t$ denote the set of edges along which requests are sent in $\RR_t$, and let $F_t$ denote the set of edges of the approved requests, along which the rumor is transferred at $\RR_t$ (thus $|E_t|\leq|\NN_t|\leq n$ and  $F_t\subseteq E_t$).
A rumor spreading algorithm satisfies the {\em whispering property} if $F_t$ always forms a matching, meaning that each processor may either send or receive at most one copy of the rumor at each round.  Some authors actually require any rumor spreading algorithm to satisfy the whispering property, see, e.g.,~\cite{GasPelc96,DiksPelc00}.

Besides the time complexity, the communication effort and the robustness against faults are two further important performance measures.
 There are some variants of the definition of \emph{message complexity} of rumor spreading algorithms.
The strictest definition  counts all communication requests, i.e. $\sum_t{|E_t|}$, e.g.,~\cite{GasPelc96,DoerrFouzReducing}. A more permissive definition  assumes that communication between uninformed processors is given for free due to frequent injections of other rumors~\cite{KSSV00,CHKM12},
and hence it reduces to
$\sum_t |\{(u,v): (u,v)\in E_t \mbox{~and either } u \mbox{ or } v
 \mbox{ holds the rumor }\}|.$
As will be noted soon, our algorithms have the minimum possible message complexity by both definitions.

 The \emph{faults} assumed in this paper are  {\em initial crash failures}: A processor is faulty in a given execution if it never sends a message during the execution. We consider two types of failure policies, associated with a success parameter $p\in(0,1)$: {\em random failures}, in which each process may fail independently with probability $1-p$, and {\em adversarial failures}, in which the adversary may fail (before the execution of the algorithm starts) any subset of up to $(1-p)n$ processors, excluding the start processor. An $(i,j)$ request is {\em failed}  if $j$ is faulty, and it is {\em successful} otherwise. Note that in our synchronized model, a faulty node $j$ is identified by not responding to an $(i,j)$ request.

\subsection{The Algorithm of Gasieniec and Pelc}
\label{subsec:GP}
We use the following variant of the divide-and-conquer
  algorithm of Gasieniec and Pelc~\cite{GasPelc96}, to be denoted $\GP$. Initially  the start processor 0 holds a list $(1,2,\ldots,n-1)$ of all uninformed processors, 
and all other processors hold empty lists.  At each round, each processor $i$ which holds a nonempty list $(j_1,\ldots,j_k)$, sends an $(i,j_1)$ request and deletes $j_1$ from its list.
If the request is successful then $i$ also sends to $j_1$ the rumor, appends to it the list $\EVEN(j_2,\ldots,j_k) = (j_3, j_5, \ldots)$, and sets its own list to  $\ODD(j_2,\ldots,j_k)$.
Thus, in this case, the next round starts with $i$ holding the list $\ODD(j_2,\ldots,j_k)$ and
processor $j_1$ holding the list $\EVEN(j_2,\ldots,j_k)$.
The algorithm terminates when all processors hold empty lists.\\
 {\bf Implementation note:} Observe that each list of the form $\EVEN(j_2,\ldots,j_k)$ generated during the algorithm is an arithmetic progression whose difference is $2^m$ for some integer $m\le \log n$.  Sending such a list can be done by sending the first element $j_3$ and the exponent $m$.
 Overall, this requires an addition of less than $2\log n$ bits to the rumor.

Note that this protocol automatically ensures that (i) each node receives at most one communication request per round (hence the whispering property is satisfied),
 (ii) only requests from informed nodes to uninformed ones are issued (hence there is no reason not to approve a request), and (iii) the protocol terminates as soon as all processors know the rumor.

 The optimality of the message complexity of the $\GP$ algorithm (under the different variants of ``message complexity'' discussed in Section~\ref{subsec:rsp}) is implied by the following straightforward observation.

\begin{lemma}[\cite{GasPelc96}]
\label{lem:optimalmessage}
The $\GP$ algorithm performs the minimum possible number 
 of communication requests, namely $n-1$ communication requests in each possible execution.
\end{lemma}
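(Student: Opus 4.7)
My plan is to define a potential function $\Phi_t$ equal to the total number of elements summed over all processors' lists at the start of round $\RR_t$. Initially, processor $0$ holds the list $(1,\ldots,n-1)$ and all other lists are empty, so $\Phi_1 = n-1$. The algorithm's termination condition is precisely $\Phi_t = 0$, so it suffices to show that each communication request decreases $\Phi$ by exactly one.

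The key step is a local case analysis on what happens to processor $i$'s list $(j_1,\ldots,j_k)$ when it issues an $(i,j_1)$ request. In the failed case, $i$'s list becomes $(j_2,\ldots,j_k)$ and no other list changes, so $\Phi$ decreases by $1$. In the successful case, $i$'s list becomes $\ODD(j_2,\ldots,j_k)$, of length $\lceil (k-1)/2\rceil$, while $j_1$'s list becomes $\EVEN(j_2,\ldots,j_k)$, of length $\lfloor (k-1)/2\rfloor$; the sum of these two lengths is $k-1$, so again $\Phi$ drops by exactly $1$. Since the sets $\NN_t$ of active processors are disjoint within a round in the sense that each processor issues at most one request, summing over all requests in round $\RR_t$ gives $\Phi_{t+1} = \Phi_t - |E_t|$.

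Summing over all rounds until termination yields $\sum_t |E_t| = \Phi_1 - \Phi_{\text{end}} = n-1$, which is the claimed equality. For the matching lower bound, I would simply observe that any rumor spreading protocol that informs all $n-1$ other (non-crashed) nodes must issue at least $n-1$ valid requests, since each request causes at most one new node to learn the rumor; and in an execution with no failures, the $\GP$ run informs all $n-1$ others, so $n-1$ is optimal even under the strictest definition of message complexity.

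I do not expect any real obstacle: the only thing to be slightly careful about is that a list is not altered by any processor other than its owner within a single round, so summing per-request decrements over all of $E_t$ is valid; and that the implementation is well-defined even when $k=1$ (in which case $\ODD$ and $\EVEN$ of the empty tail are both empty, consistent with the $k-1 = 0$ count).
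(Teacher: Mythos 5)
Your proof is correct, and since the paper itself states this lemma as a ``straightforward observation'' attributed to \cite{GasPelc96} without supplying a proof, there is no in-paper argument to compare against. Your potential-function argument is the natural one: each request (failed or not) decrements the total list size by exactly $1$, because deleting $j_1$ from $(j_1,\ldots,j_k)$ costs one element, and a successful split into $\ODD(j_2,\ldots,j_k)$ and $\EVEN(j_2,\ldots,j_k)$ preserves the remaining $k-1$ elements. An equivalent (and perhaps slightly tidier) phrasing of the same idea is that the protocol maintains the invariant that every not-yet-contacted processor $j\in\{1,\ldots,n-1\}$ appears in exactly one list at a time, and each request permanently removes exactly one such $j$ from the union of all lists; thus exactly $n-1$ requests are issued. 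Your phrase ``the sets $\NN_t$ of active processors are disjoint within a round'' is a slightly odd way of stating what you actually need, namely that each processor's list is modified only by that processor and each active processor issues exactly one request per round, so the per-request decrements add up to $|E_t|$. Your lower-bound observation is also fine: in a failure-free execution any protocol must issue at least $n-1$ requests to inform the $n-1$ other nodes, and GP always issues exactly $n-1$ regardless of the failure pattern, hence it matches the minimum.
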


In the presence of $f$ crashed nodes, the time complexity of the $\GP$ algorithm is given by the following lemma.

\begin{lemma}[\cite{GasPelc96}]
\label{lem:GPfailure}
For up to $f$ initial node failures the time complexity of the $\GP$ algorithm is at most $f + \lceil\log(n-f)\rceil$.
This bound is tight if processors $1, \ldots, f$ are failed.
\end{lemma}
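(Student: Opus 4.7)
The plan is to prove the upper bound by induction on the list length $k$, establishing the slightly more general claim that any processor starting a sub-execution with a list of length $k$ containing at most $f$ faulty entries completes within $f + \CEIL{\log(k+1-f)}$ rounds. Applied to the start processor (whose initial list has length $n-1$), this yields the stated bound.

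For the inductive step, I would examine the first round of the sub-execution. If the head $j_1$ of the current list is faulty (only possible when $f \geq 1$), the call fails, $j_1$ is deleted, and what remains is a single sub-instance of size $k-1$ with at most $f-1$ faults; by the induction hypothesis this takes at most $(f-1) + \CEIL{\log(k-f+1)}$ further rounds, matching the target bound exactly. If $j_1$ is non-faulty, then after one round processors $i$ and $j_1$ work in parallel on sub-lists of sizes $k_1 = \FLOOR{(k-1)/2}$ and $k_2 = \CEIL{(k-1)/2}$, with the $f$ faults distributed adversarially as $f_1 + f_2 = f$. Writing $a_j := k_j + 1 - f_j$ (so $a_1 + a_2 = k+1-f$), the inductive hypothesis reduces the target to the inequality $\CEIL{\log a_j} \leq (f_{3-j} - 1) + \CEIL{\log(a_1+a_2)}$ for each $j \in \{1,2\}$.

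The hard part will be this last inequality in the ``concentrated failure'' sub-case $f_{3-j} = 0$. When $f_{3-j} \geq 1$, monotonicity $a_j \leq a_1 + a_2$ settles it immediately. When $f_{3-j} = 0$, the required bound strengthens to $\CEIL{\log(a_1+a_2)} \geq \CEIL{\log a_j} + 1$, equivalently $a_1 + a_2 > 2^{\CEIL{\log a_j}}$; I would verify this by a short case split on whether $a_j$ is a power of $2$, using the constraints $|k_1 - k_2| \leq 1$ and $k_j + k_{3-j} = k-1$, with the single failure in the concentrated sub-list absorbing the one-bit asymmetry of the split (when $f = 0$ the same inequality reduces to the standard doubling bound for fault-free $\GP$).

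For the tightness claim, I would exhibit the execution in which processors $1, \ldots, f$ are all faulty. Processor $0$ spends its first $f$ rounds calling $1, 2, \ldots, f$ in order; each call fails and merely deletes its target without informing a new processor. Thereafter processor $0$ holds $(f+1, \ldots, n-1)$, a list of length $n-1-f$ with no further faults, and since the number of informed processors at most doubles per round, at least $\CEIL{\log(n-f)}$ additional rounds are necessary, giving the total of $f + \CEIL{\log(n-f)}$.
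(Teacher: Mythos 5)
The paper states this lemma as a known result from \cite{GasPelc96} and gives no proof of its own, so there is no internal argument to compare your proposal against; I am therefore only assessing whether your proof is correct. It is. Your inductive claim---that a processor starting with a list of length $k$ containing at most $f$ faulty names finishes within $f + \lceil \log(k+1-f) \rceil$ rounds---is the right generalization, and both branches of the induction close. In the faulty-head case the bound carries over verbatim, since $(f-1) + \lceil \log((k-1)+1-(f-1)) \rceil = (f-1) + \lceil \log(k+1-f) \rceil$. In the split case, with $a_j = k_j + 1 - f_j$ and $a_1 + a_2 = k+1-f$, the only nontrivial subcase is indeed $f_{3-j}=0$, and your reduction to $a_1 + a_2 > 2^{\lceil \log a_j \rceil}$ is correct. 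To make that step airtight: if $a_j = 2^m$ one only needs $a_{3-j} \geq 1$, which holds since $a_{3-j} = k_{3-j}+1 \geq 1$; if $2^{m-1} < a_j < 2^m$ then $a_j \geq 2^{m-1}+1$, and from $|k_1-k_2|\le 1$ one gets $a_{3-j} = k_{3-j}+1 \geq k_j = a_j - 1 + f \geq a_j - 1$, hence $a_1+a_2 \geq 2a_j - 1 \geq 2^m + 1 > 2^m$. Note that the extra $+f$ slack is not actually needed here; the asymmetry of the split is already absorbed by the strict inequality $a_j \geq 2^{m-1}+1$, so your intuition about ``the failure absorbing the one-bit asymmetry'' is a bit off even though the computation you describe goes through. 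The tightness argument is also fine: the first $f$ rounds are forced and inform nobody, the doubling bound gives $\lceil \log(n-f) \rceil$ further rounds as a lower bound, and equality with the upper bound you just proved yields tightness.
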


\subsection{Reminder: Chernoff's Bounds}%
\label{subsec:Chernoff}

We apply several versions of Chernoff's bound, see, for example,~\cite{MitzenmacherU05} or~\cite{DubhashiP98} for an introduction to these elementary tail bounds.

\begin{theorem}[Chernoff's bounds]
\label{thm:Chernoff}
Let $X=\sum_{i=1}^n{X_i}$ be the sum of $n$ independently distributed random variables $X_i$, where each variable $X_i$ takes values in $[0,1]$.
Then the following statements hold.
\begin{align}
& \forall t>0:\label{cher:i}\\
	  &\quad \Pr[X > \E[X]+t]\leq \exp(-2t^2/n), \nonumber\\
		&\quad \Pr[X < \E[X]-t]\leq \exp(-2t^2/n)\,.\nonumber\\
& \forall 1> \varepsilon >0:\label{cher:ii}\\
		&\quad \Pr\big[X< (1-\varepsilon) \E[X] \big] \leq \exp\big(- \varepsilon^2 \E[X]/2 \big), \nonumber\\
	 	&\quad \Pr\big[X> (1+\varepsilon) \E[X] \big] \leq \exp\big(- \varepsilon^2 \E[X]/3 \big)\,. \nonumber\\
& \forall t>2 e \E[X]:
	  \Pr[X>t] \leq 2^{-t}\,. \label{cher:iii}
\end{align}
\end{theorem}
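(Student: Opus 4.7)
The plan is to use the standard Chernoff--Cramér exponential moment method. For a parameter $s>0$, apply Markov's inequality to the random variable $e^{sX}$ to obtain $\Pr[X \geq a] \leq e^{-sa}\E[e^{sX}]$. Independence of the $X_i$ gives the product decomposition $\E[e^{sX}] = \prod_{i=1}^n \E[e^{sX_i}]$, so the whole argument reduces to bounding a single factor $\E[e^{sX_i}]$ for $X_i \in [0,1]$ and then optimizing $s$. A symmetric argument with $s<0$ (or applied to $-X$) handles the lower tails.

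For bound~\eqref{cher:i}, I would invoke Hoeffding's lemma: for any random variable $Y$ supported on $[0,1]$, $\E[e^{s(Y-\E[Y])}] \leq e^{s^2/8}$. Multiplying over $i$ yields $\E[e^{s(X-\E[X])}] \leq e^{ns^2/8}$, and choosing $s = 4t/n$ after Markov gives exactly $\exp(-2t^2/n)$. For bound~\eqref{cher:ii}, I would use the tighter Bernoulli-type estimate: since $e^{sy}$ is convex and $y \in [0,1]$, one has $e^{sy} \leq 1 + (e^s-1)y$, so $\E[e^{sX_i}] \leq 1 + (e^s-1)\E[X_i] \leq \exp((e^s-1)\E[X_i])$. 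Writing $\mu = \E[X]$, this gives $\Pr[X \geq (1+\varepsilon)\mu] \leq \bigl(e^\varepsilon/(1+\varepsilon)^{1+\varepsilon}\bigr)^\mu$ after the substitution $s = \ln(1+\varepsilon)$. Numerical inequalities $(1+\varepsilon)\ln(1+\varepsilon) - \varepsilon \geq \varepsilon^2/3$ (for $\varepsilon \geq 0$) and the analogous $\varepsilon^2/2$-inequality on $[0,1]$ for the lower tail yield the two stated exponents; the lower-tail direction uses $s<0$ and the symmetric estimate.

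For bound~\eqref{cher:iii}, I would apply the classical multiplicative form $\Pr[X \geq (1+\varepsilon)\mu] \leq \bigl(e^\varepsilon/(1+\varepsilon)^{1+\varepsilon}\bigr)^\mu$ derived above with $1+\varepsilon = t/\mu$. This rewrites as $\Pr[X \geq t] \leq (e\mu/t)^t \cdot e^{-\mu} \leq (e\mu/t)^t$. The assumption $t > 2e\mu$ gives $e\mu/t < 1/2$, hence $(e\mu/t)^t < 2^{-t}$.

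The main analytic obstacle is Hoeffding's lemma underlying~\eqref{cher:i}. The standard route is to show that the cumulant generating function $\psi(s) = \ln \E[e^{sY}]$ of a centered $[0,1]$-valued variable satisfies $\psi(0)=\psi'(0)=0$ and $\psi''(s) \leq 1/4$ uniformly (the $1/4$ bound is just the maximum variance of a $[0,1]$-variable), so Taylor's theorem yields $\psi(s) \leq s^2/8$. Once this lemma is in hand, everything else is manipulation of the Chernoff inequality $\Pr[X\geq a] \leq \inf_{s>0} e^{-sa}\E[e^{sX}]$ and elementary calculus to pick the optimal $s$ in each case.
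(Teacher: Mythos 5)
The paper does not prove Theorem~\ref{thm:Chernoff}; it states it with a citation to Dubhashi and Panconesi~\cite{DubhashiP98}, so there is no in-paper argument to compare against. Your proposal reconstructs the standard Cram\'er--Chernoff exponential-moment proof underlying that reference: Markov applied to $e^{sX}$, the product decomposition by independence, Hoeffding's lemma (via $\psi''\le 1/4$) for the additive bound~\eqref{cher:i}, the convexity/chord bound $e^{sy}\le 1+(e^s-1)y$ and the optimal choice $s=\ln(1+\varepsilon)$ for the multiplicative bounds~\eqref{cher:ii}, and the substitution $1+\varepsilon=t/\mu$ for~\eqref{cher:iii}. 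This is the right route, and the computations for~\eqref{cher:i} (take $s=4t/n$) and for~\eqref{cher:iii} ($\Pr[X\ge t]\le e^{-\mu}(e\mu/t)^t\le (e\mu/t)^t<2^{-t}$) check out.

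One caveat: the elementary inequality you invoke for the upper tail in~\eqref{cher:ii}, namely $(1+\varepsilon)\ln(1+\varepsilon)-\varepsilon\ge \varepsilon^2/3$ ``for $\varepsilon\ge 0$,'' is actually false once $\varepsilon$ is large enough (around $\varepsilon\approx 1.8$ it changes sign; at $\varepsilon=2$ the difference is $3\ln 3-2-\tfrac{4}{3}\approx -0.04$), and indeed the $\exp(-\varepsilon^2\E[X]/3)$ upper-tail bound can be beaten by a Poisson-type tail for large $\varepsilon$. The usual statement, including the one in~\cite{DubhashiP98}, restricts the upper tail of~\eqref{cher:ii} to $0<\varepsilon\le 1$; the paper's blanket ``$\forall\varepsilon>0$'' is a small overstatement that your proof inherits rather than introduces. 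Your lower-tail claim $(1-\varepsilon)\ln(1-\varepsilon)+\varepsilon\ge\varepsilon^2/2$ does hold for all $\varepsilon\in(0,1)$ (its second derivative $1/(1-\varepsilon)-1$ is nonnegative there), so that half is fine. Since the paper only ever invokes~\eqref{cher:i} in its arguments, none of this affects the downstream results, but the $\varepsilon\le 1$ restriction in the upper-tail part of~\eqref{cher:ii} should be stated explicitly.
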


Chernoff's bounds apply also to random geometric variables. A proof of the following theorem can be found, e.g., in~\cite[Theorem~1.14]{Doerr11bookchapter}.

\begin{theorem}[Chernoff's bound for random geometric variables]
\label{thm:geometricconcentration}
Let $p \in (0,1)$. Let $X_1,\ldots, X_n$ be independent geometric
random variables with
$\Pr[X_i = k] = (1-p)^{k-1}p$ for all
$k \in \N$.
Let
$X :=\sum_{i=1}^n{X_i}$.

Then for all $\delta>0$,
\begin{align*}
\Pr[X \geq(1+\delta) \E[X]]
\leq
\exp\left(- \frac{\delta^2 (n-1)}{2 (1+\delta)}\right)\,.
\end{align*}
\end{theorem}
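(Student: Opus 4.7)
The plan is to interpret $X$ as the waiting time for the $n$-th success in a sequence of i.i.d.\ Bernoulli$(p)$ trials, thereby reducing the upper tail of $X$ to the lower tail of a binomial random variable, to which the multiplicative Chernoff bound~\eqref{cher:ii} can be applied.

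Viewing an infinite sequence of independent Bernoulli$(p)$ trials and letting $T_n$ be the index of the $n$-th success, $T_n$ is distributed as $X$, and for any positive integer $m$ one has the identity
\[
\Pr[X \geq m] \;=\; \Pr[B_{m-1} \leq n-1], \qquad B_{m-1} \sim \mathrm{Bin}(m-1, p),
\]
since $\{T_n \geq m\}$ is the event that strictly fewer than $n$ successes occur among the first $m-1$ trials. Integrality of $X$ then lets me replace $(1+\delta)n/p$ by the integer $m := \lceil(1+\delta)n/p\rceil$ in the tail, turning the problem into a lower-tail Chernoff estimate for $B_{m-1}$.

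The key algebraic step is to verify that $\mu := \E[B_{m-1}] = (m-1)p$ satisfies $\mu \geq (1+\delta)(n-1)$. This reduces to the chain $m-1 \geq (1+\delta)n/p - 1 \geq (1+\delta)(n-1)/p$, whose second link is equivalent to $(1+\delta)/p \geq 1$ and holds because $p < 1 \leq 1+\delta$. Choosing $\varepsilon$ with $(1-\varepsilon)\mu = n-1$ and invoking~\eqref{cher:ii} in its non-strict form (which follows from the same Markov-on-MGF derivation) gives
\[
\Pr[B_{m-1} \leq n-1] \;\leq\; \exp(-\varepsilon^2\mu/2).
\]
A short monotonicity check on $g(\mu) := (\mu-(n-1))^2/\mu$ on $[n-1,\infty)$ (its derivative factors as $(\mu^2-(n-1)^2)/\mu^2 \geq 0$) shows $\varepsilon^2\mu = g(\mu) \geq g((1+\delta)(n-1)) = \delta^2(n-1)/(1+\delta)$ whenever $\mu \geq (1+\delta)(n-1)$, which completes the bound.

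The main obstacle is the careful integer-rounding bookkeeping needed to recover exactly the factor $(n-1)$ (rather than $n$) in the target exponent. This factor emerges from aligning three things simultaneously: the ceiling forced on $m$ by integrality of $X$, the inequality $p < 1+\delta$ which pushes $\mu$ past $(1+\delta)(n-1)$, and the Chernoff threshold $n-1$ matching $\{B_{m-1} \leq n-1\}$ on the nose. Using the strict-inequality form of Chernoff (which would effectively require the threshold $n$ instead of $n-1$) loses the tight constant, so the mildly non-strict form of~\eqref{cher:ii} is what makes the bound tight.
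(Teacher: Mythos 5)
The paper itself does not prove this theorem---it only cites \cite[Theorem~1.14]{AugerD11}---so there is no in-paper proof to compare against. Your argument is correct and self-contained, and it is the standard route to such bounds. The duality $\Pr[X\geq m]=\Pr[B_{m-1}\leq n-1]$ with $B_{m-1}\sim\mathrm{Bin}(m-1,p)$ is exactly right for $X$ the waiting time of the $n$-th success; the rounding chain $m-1=\lceil(1+\delta)n/p\rceil-1\geq(1+\delta)n/p-1\geq(1+\delta)(n-1)/p$ holds since $(1+\delta)/p>1$, so $\mu=(m-1)p\geq(1+\delta)(n-1)$; taking $\varepsilon$ with $(1-\varepsilon)\mu=n-1$ and invoking the non-strict lower-tail Chernoff bound (which, as you say, comes out of the same MGF computation as \eqref{cher:ii}) gives $\exp(-\varepsilon^2\mu/2)=\exp(-g(\mu)/2)$ with $g(\mu)=(\mu-(n-1))^2/\mu$; and your factorization $g'(\mu)=(\mu^2-(n-1)^2)/\mu^2\geq 0$ on $[n-1,\infty)$ is correct, yielding $g(\mu)\geq g((1+\delta)(n-1))=\delta^2(n-1)/(1+\delta)$, which is precisely the exponent required. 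The observation that using the strict-inequality form of \eqref{cher:ii} would force the threshold up to $n$ and thereby lose the $(n-1)$ is also accurate and is the right thing to flag.
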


\section{Random Failure Analysis of the \texorpdfstring{$\GP$}{GP} Algorithm via a New Random Wakeup Model} \label{sec:wakeupandcoupling}

In this section we show that the $\GP$ algorithm has a much better performance against {\em random} node failures than the worst case performance given in  Lemma~\ref{lem:GPfailure} against {\em adversarial} node failures.
We assume that each processor may fail with probability $1-p$ independently.
It is not hard to see that the expected runtime is bounded from below by the solution to the recursive formula $F(1)=0; F(n) = p\cdot F(n/2)+(1-p)\cdot F( n-1)+1$, which is $\log n/p +O(1)$.
On the other hand, we show that every processor is informed after $3.5 \log n/p$ rounds, with high probability.

We refer to $p$ as the \emph{success rate,} and to $1-p$ as the \emph{failure rate}.
Due to the sequential nature of the $\GP$ protocol, even a very small change in the failure pattern (that is, the set of failed nodes) may imply a large change in the time complexity.
This makes a straightforward analysis of this model a bit tricky.
To ease the analysis, we start by considering a similar protocol in a simpler model, the {\em random wakeup model}, which we believe to be of independent interest.
We then transfer the results to the standard random node failure model by coupling the models.

\subsection{The Random Wakeup Model}
\label{sub:wakeup}
We regard the following divide-and-conquer wake-up protocol, which
is inspired by the GP algorithm. The start processor 0 starts with the
list $(1, 2, \ldots, n-1)$ of nodes to be informed. It sends in every round a communication (``wakeup'') requests to processor $1$, until this processor is woken up. It then forwards to it the rumor, appended by the list $\EVEN(2,..., n-1)$, thus keeping for itself the
list $\ODD(2,...,n-1)$ as its todo-list. It then tries to wake up processor $2$ in the next round, and so on.
In this model, each wakeup request is successful with probability~$p$, independently of previous requests.
Hence in the implied rumor spreading algorithm, to be denoted $\WU$, whenever $u$ selects an edge $(u,v)$, it repeatedly sends $(u,v)$ requests until $v$ is woken up.
Informally, the time complexity of the algorithm in this model is larger than in the standard initial-failures model, since in the standard model only one request is sent to each processor. A formal proof of this statement is given in Section~\ref{subsec:monotonicity}. Note also that similar to the $\GP$ algorithm, the $\WU$ algorithm performs the minimum possible number of communication requests in each execution: $n+f-1$ requests when there are $f$ failed wakeup messages.

The time complexity of the random wakeup model is easier to analyze since the implied $\WU$ algorithm sends communication requests along a fixed set of edges, which is independent of the specific failure pattern.
%
 For analyzing this time complexity
  we represent the $\WU$ algorithm by a full binary tree $\TT$ with $n$ leaves, in which each vertex $x$ is labeled by a processor name $\LL(x)\in\{0, \ldots, n-1\}$ according to the following scheme (cf. Figure~\ref{wake_up}).
The leaves of $\TT$ are labeled by the processor names $0,\ldots,n-1$, according to some arbitrary but fixed order.
 The labeling of an internal vertex $x$ with children $y,z$ is
  $\LL(x)=\min\{\LL(y),\LL(z)\}$.  Thus $\LL(r)=0$ (where $r$ is the root of the tree), and for each processor $k$, the vertices of $\TT$ labeled by $k$ form a directed path, $\PP_k$, ending at a leaf of $\TT$.

The algorithm for processor $k\in\{0,\ldots,n-1\}$ implied by the above labeled tree $\TT$ is the following: After receiving the rumor, $k$ moves along the vertices of $\PP_k$. When $k$ steps on a non-leaf vertex $x\in\PP_k$ with children $y,z$,  it repeatedly sends communication requests to $j=\max\{ \LL(y),\LL(z)\}$ until $j$ wakes up.

\begin{figure}[t]
\begin{center}
  \includegraphics[width=0.7\columnwidth]{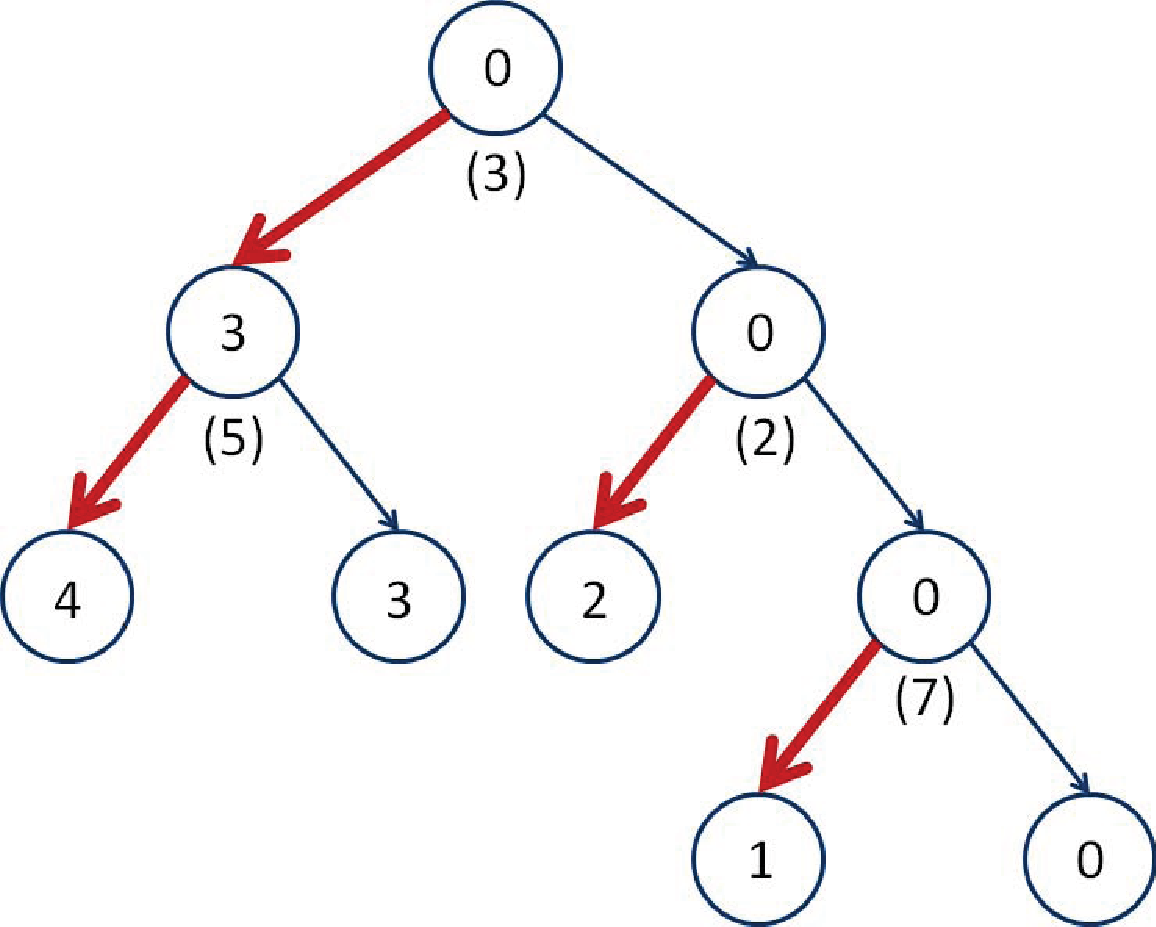}
 \caption{\textbf{Illustration of the rumor spreading in the random wakeup model for five processors:}
Each vertex of $\TT$ is labeled by a processors name.
The red bold edges indicate rumor transfers.
Thus processor 0 always transfers the rumor to processors 3, 2, and 1 (in this order).
The numbers in parentheses beneath internal vertices indicate the number of  wakeup calls in a specific execution.
That is, in the depicted execution processor 1 woke up only by the seventh $(0,1)$ request. The time complexity of this execution is $3+2+7=12$.}
\label{wake_up}
 \end{center}
\end{figure}

Consider now a specific execution $\EXWU$ of the above random wakeup algorithm. For each internal vertex $x\in\TT$ with children $y,z$, let $k_x=\LL(x)$ and $j_x=\max\{\LL(y),\LL(z)\}$. Denote by $\REQ(x)$ the number of $(k_x,j_x)$ requests sent by $k_x$ in $\EXWU$. Then  $\REQ(x)$ is a geometric random variable with probability $p$, that is
 $\Pr[\REQ(x)=\ell]=(1-p)^{\ell-1}p$ for all positive integers $\ell$, and $\E(\REQ(x))=1/p$.

  For a processor $j\in[0\ldots n-1]$, let $P_j$ be the  path from the root $r$ of $\TT$ to the (unique) leaf labeled by $j$, and let $\REQ(P_j):= \sum_{x\in P_j }\REQ(x)$. Then the time complexity of $\EXWU$ is given by
\[
  \TIME(\EXWU)= 
\max_{j\in[0\ldots n-1]}   \{\REQ(P_j)\}.
\]

\subsection{The Time Complexity of the Random Wakeup Model}
\label{subsec:fnp}

\begin{theorem}
\label{thm:wakeuprandomfailure}
Let $c>1$ be a constant and let $p \in (0,1)$ be arbitrary (possibly $p=1-o(1)$).

With probability at least\\
$1-n\exp\left(-\frac{(c-1)^2}{2c} (\lceil \log(n-1) \rceil -1) \right)$,
the $\WU$ algorithm with success rate $p$ has delivered the rumor to all processors after $\frac{c}{p} (\lceil \log(n-1) \rceil+1)$ rounds.
\end{theorem}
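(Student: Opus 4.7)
The plan is to bound $\TIME(\EXWU)=\max_j \REQ(P_j)$ by a union bound over $n$ applications of Chernoff's inequality for sums of independent geometric random variables (Theorem~\ref{thm:geometricconcentration}). The first step is a purely deterministic bound on tree height: I will show that every root-to-leaf path in $\TT$ contains at most $L:=\lceil\log(n-1)\rceil+1$ internal vertices. Starting from a list of length $n-1$ at processor~$0$ and recursively splitting each parent list of length $k$ into $\ODD$ and $\EVEN$ children of sizes $\lceil(k-1)/2\rceil$ and $\lfloor(k-1)/2\rfloor$, a routine induction on $n$ yields this height bound.

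The second step applies Chernoff to a single path. Fix $j$ and let $\ell_j\le L$ be the number of internal vertices on $P_j$. The variables $\{\REQ(x):x\in P_j\text{ internal}\}$ are mutually independent geometric$(p)$ random variables, since distinct internal vertices of $\TT$ correspond to distinct (caller, callee) pairs whose wakeup attempts are independent in the model of Section~\ref{sub:wakeup}. To treat paths of different lengths on equal footing, I would pad $\REQ(P_j)$ with $L-\ell_j$ fresh independent geometric$(p)$ variables, obtaining $S_j\ge\REQ(P_j)$ with $\E[S_j]=L/p$. Theorem~\ref{thm:geometricconcentration} applied with $\delta=c-1$ then gives
\[
\Pr\!\left[\REQ(P_j)\ge\tfrac{c}{p}L\right]\le\Pr\!\left[S_j\ge(1+\delta)\E[S_j]\right]\le\exp\!\left(-\tfrac{(c-1)^2(L-1)}{2c}\right).
\]
A union bound over the $n$ processors then delivers the desired probability bound on $\TIME(\EXWU)\ge cL/p$.

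The main obstacle, in my view, is not a single deep step but organising the independence structure correctly: along a single path $P_j$ the relevant geometric variables are independent (distinct wakeup edges), whereas across different paths they are heavily correlated through shared tree prefixes. This correlation never intervenes because the union bound needs no independence across $j$, so the proof ultimately reduces to the tree-height bound and a mechanical Chernoff invocation.
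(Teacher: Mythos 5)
Your proof is correct and takes essentially the same route as the paper: bound the number of internal vertices on each root-to-leaf path of $\TT$, apply Theorem~\ref{thm:geometricconcentration} with $\delta=c-1$ to the sum of independent geometric$(p)$ variables along each path, and finish with a union bound over the $n$ paths. The only deviation is your padding trick, which replaces the paper's case distinction (a path has either $\lceil\log(n-1)\rceil$ or $\lceil\log(n-1)\rceil+1$ internal vertices) and in fact yields the marginally tighter exponent $\lceil\log(n-1)\rceil$ in place of $\lceil\log(n-1)\rceil-1$, which of course still implies the stated bound.
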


The success probability in
Theorem~\ref{thm:wakeuprandomfailure} becomes $1-o(1)$ for $c$ with $\frac{(c-1)^2}{2c \ln 2}>1$, e.g., for $c\geq 7/2$.
The theorem follows essentially from the
Chernoff bound for random geometric variables, cf. Theorem~\ref{thm:geometricconcentration}.

\begin{proof}[Proof of Theorem~\ref{thm:wakeuprandomfailure}]
By construction, for each processor $j\in [0\ldots n-1]$ we have that path $P_j$ has at least
$\lceil \log(n-1) \rceil$ and  at most
$\lceil \log(n-1) \rceil+1$ nodes.
  Therefore the expected delay
 of path $P_j$, $\E[\REQ(P_j)]$,  equals
$(1/p) \lceil \log(n-1) \rceil$ or
$(1/p) (\lceil \log(n-1) \rceil + 1)$, respectively.
Since the variables $\{\REQ(x):x\in P_j\}$ are mutually independent,  by Theorem~\ref{thm:geometricconcentration} we have
\begin{align*}
& \Pr[\REQ(P_j) > (c/p) (\lceil \log(n-1) \rceil + 1)]\\
&\quad \leq
\Pr[\REQ(P_j) > (1+(c-1)) \E[\REQ(P_j)]]\\
&\quad \leq
\exp\left(-\frac{(c-1)^2}{2c} (\lceil \log(n-1) \rceil -1) \right)
\,.
\end{align*}
A simple union bound over all $n$ paths concludes the proof.
 \end{proof}

\subsection{Coupling the \texorpdfstring{$\GP$}{GP} and \texorpdfstring{$\WU$}{WU} Models}
\label{sec:coupling}

 To relate the time complexities of the random wakeup model and the
 $\GP$ algorithm in the presence of random node failures, we embed the failure patterns of both models in the probability space consisting of infinite binary vectors $\{\vecb \mid \vecb\in\{0,1\}^\N\}$, where the entries $\vecb_1, \vecb_2, \ldots$ are i.i.d. with a Bernoulli distribution parametrized by the success rate $p$---see, e.g., Chapter~2 of~\cite{B95book}. We need to consider infinite sequences since the number of possible failures in executions of the $\WU$ algorithm is unbounded.
The embedding of failure patterns induces distributions over executions of rumor spreading algorithms, similarly to the way randomized algorithms are presented in the classical work of Yao~\cite{Yao77}.

  In Section~\ref{subsub:patterns} we define the mappings of infinite binary vectors to failure patterns, and then to execution trees, whose heights represent time complexity of the corresponding execution of the $\GP$ and the $\WU$ protocol, respectively.
  In Section~\ref{subsec:monotonicity} we use this mapping to present our coupling argument (Lemma~\ref{lem:GP_WU}).

  Section~\ref{subsub:patterns} is quite technical. The reader only interested in the  main results may want to skip these details and jump directly to  Section~\ref{subsec:monotonicity}, considering $\HGT(\TGP(n,\vecb))$ and $\HGT(\TWU(n,\vecb))$ defined to be the time complexity of one particular execution of the $\GP$ algorithm and the $\WU$ algorithm, respectively.

\subsubsection{Failure Patterns and Execution Trees}
\label{subsub:patterns}

Any execution of the $\GP$ or of the $\WU$ algorithm with a single start processor is determined by the initial {\em system configuration} (in short {\em configuration}).
A configuration is a pair $(n,\vecb)$, where $n$ is the number of processors to which the rumor has to be delivered, and $\vecb=(b_1,b_2,\ldots)$ is an infinite binary vector representing a {\em failure pattern}.
An entry $b_i=0$ corresponds to a failed request and $b_i=1$ corresponds to a successful request. For each configuration $(n,\vecb)$, $\EXGP(n,\vecb)$ denotes the execution of the $\GP$ algorithm on $(n,\vecb)$, and $\EXWU(n,\vecb)$ denotes the execution of the $\WU$ algorithm on $(n,\vecb)$ ($\EXGP(n,\vecb)$ is always determined by the first $n$ bits of $\vecb$ - see Figure~\ref{exec_tree}, while $\EXWU(n,\vecb)$ is usually determined by a longer prefix of $\vecb$).

$\EXGP(n,\vecb)$ is defined by the
 {\em execution tree}  $\TGP(n,\vecb)$ as follows.
The vertices of $\TGP(n,\vecb)$ are configurations.
 The root of $\TGP(n,\vecb)$ is the configuration $(n,\vecb)$. If $\vecb=0 \vecc$ (for some infinite binary vector $\vecc$) then the first request sent by the execution failed. Hence in the next round there is still only one informed processor, with $n-1$ uninformed processors in its list. Thus the only child of
  $(n,\vecb)$ is $(n-1,\vecc)$. If $\vecb=1\vecc$ then the first request is successful, and hence $(n,\vecb)$ has a left child
     $(\CEIL{\tfrac{n-1}{2}},\ODD(\vecc))$ and a right child $ (\FLOOR{\tfrac{n-1}{2}},\EVEN(\vecc))$.\\
   This rule applies to all vertices of the tree: For $k>0$ and binary (infinite) vector $\vecc$, a  vertex  $(k,  0 \vecc)$ in $\TGP$  is an internal vertex with one child: $(k-1,\vecc)$, and a vertex  $(k,1  \vecc)$  has a left child   $(\CEIL{\tfrac{k-1}{2}},\ODD(\vecc))$ and a right child   $(\FLOOR{\tfrac{k-1}{2}},\EVEN(\vecc))$.
   Vertices of the form $(0,\vecc)$ are leaves.
Figure~\ref{exec_tree} illustrates the execution tree of the $\GP$ Algorithm.

 \begin{figure}[t]
\begin{center}
  \includegraphics
  [width=0.7\columnwidth]
  {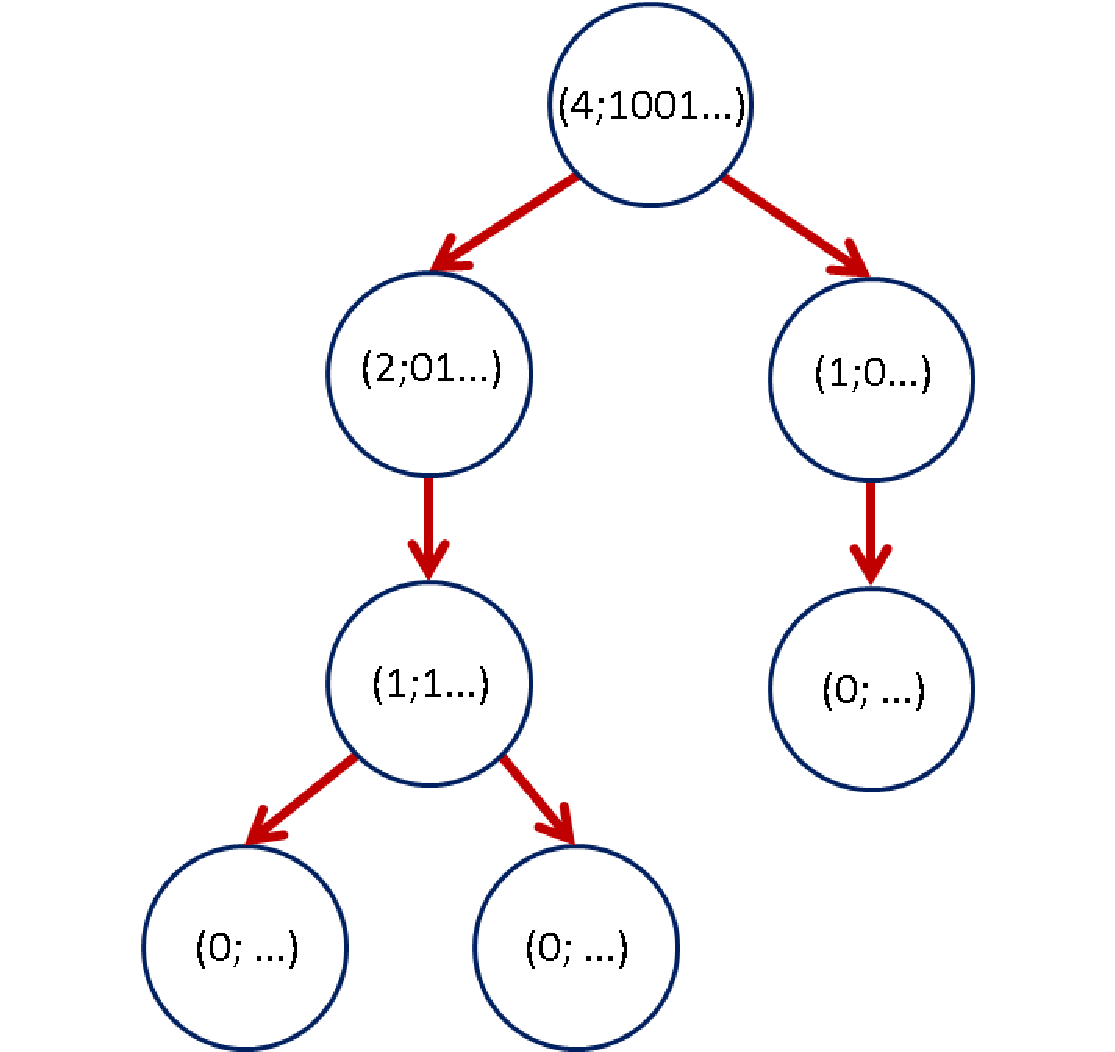}
 \caption{\textbf{$\TGP(4;1001...)$:} This tree describes the execution $\EXGP(4;1001\ldots)$ of the $\GP$ algorithm for 5 processors
 and a failure pattern $(1001...)$. Each vertex is a system configuration (${ \textit k};\vecb)$, where $k$ is the number of processors to which the rumor need to be delivered, and $\vecb$ is the corresponding failure pattern.
 }
\label{exec_tree}
 \end{center}
\end{figure}

The execution $\EXWU(n,\vecb)$ of the $\WU$ algorithm with initial configuration $(n,\vecb)$ is described by an execution graph $\TWU(n,\vecb)$ in a similar manner, with one exception:   the unique child of a vertex of the form $(k,  0  \vecc)$ for $k>0$  is  $(k,\vecc)$ (and not $(k-1,\vecc)$)---reflecting the fact that in a failed  request the number of uninformed processors remains unchanged.
Note that $\TWU(n,\vecb)$ may not be a tree, since it may contain vertices (configurations) of the form $(k,0^\N)$ which have one outgoing edge which is a self loop (corresponding to the event of infinite sequence of failed requests by a processor). It is not hard to see that $\TWU(n,\vecb)$ has no other cycles and no other directed infinite paths. Hence
$\TWU(n,\vecb)$ is a finite rooted tree or a finite rooted tree with self loops added to some of its leaves. This latter case correspond to executions in which some processor has an infinite succession of failures.

The following observation is implied by the definitions of $\TGP$ and $\TWU$.
\begin{observation}
\label{obs:coupling}
For each system configuration $(n,\vecb)$ it holds that:
\begin{enumerate}
 \item
 The  time complexity of  $\EXGP(n,\vecb)$ is $\HGT(\TGP(n,\vecb))$.
  \item
 If $\TWU(n,\vecb)$ contains a self loop, then the time complexity of  $\EXWU(n,\vecb)$ is infinite.
 The time complexity of  $\EXWU(n,\vecb)$ is $\HGT(\TWU(n,\vecb))$, otherwise.
\end{enumerate}
\end{observation}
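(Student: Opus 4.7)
My plan is to prove both items by a single structural induction on the execution graph, reading the recursive definitions of $\TGP$ and $\TWU$ directly against the algorithm dynamics. The inductive claim at each vertex $(k,\vecc)$ is that it represents a ``sub-execution'' in which one informed processor currently holds a list of length $k$ whose forthcoming requests are governed by $\vecc$, and that the number of rounds needed to empty that list equals the height of the subtree rooted at $(k,\vecc)$. Since the total time complexity of the execution coincides with the time needed at the root configuration, this immediately gives both equalities.

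For part~1, the base case is a leaf $(0,\vecc)$: the corresponding processor has an empty list, spends zero rounds, and indeed $\HGT=0$. For an internal vertex $(k, 0 \vecc)$ with $k>0$, the first request (a failure) consumes one round and transforms the processor's own configuration into $(k-1, \vecc)$, matching the unique child; by induction the remaining subproblem takes $\HGT(\TGP(k-1,\vecc))$ rounds, yielding $1+\HGT(\TGP(k-1,\vecc))=\HGT(\TGP(k, 0\vecc))$. For an internal vertex $(k, 1\vecc)$, the first round hosts a successful request, after which two sub-executions run \emph{in parallel} with configurations $(\CEIL{(k-1)/2},\ODD(\vecc))$ and $(\FLOOR{(k-1)/2},\EVEN(\vecc))$, using disjoint processor sets and disjoint bits of $\vecc$; because they share the same global clock, the overall runtime is $1+\max$ of the two sub-runtimes, which is exactly $\HGT(\TGP(k,1\vecc))$.

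For part~2, the same induction applies with the single modification that a failed wakeup request does not decrement $k$, so the unique child of $(k,0\vecc)$ is $(k,\vecc)$. If $\TWU(n,\vecb)$ contains a self-loop, then by the construction in Section~\ref{subsub:patterns} the loop sits at some vertex $(k,0^\N)$ with $k>0$, corresponding to a processor that issues an infinite sequence of failed wakeup requests and never succeeds; hence $\EXWU(n,\vecb)$ never terminates and its time complexity is $\infty$. Otherwise $\TWU(n,\vecb)$ is a finite rooted tree and the finite-depth induction above yields $\TIME(\EXWU(n,\vecb))=\HGT(\TWU(n,\vecb))$.

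The only step requiring genuine care—though not really an obstacle—is the parallel successful-request case: one must verify that the two spawned subproblems act on disjoint sets of processors and consume disjoint bits of the failure pattern (the $\ODD$/$\EVEN$ interleaving does exactly this), so that they truly execute side by side on the same global clock. This disjointness is what justifies using the $\max$ operator, and hence the tree height, as the right measure of the completion time of the combined sub-execution.
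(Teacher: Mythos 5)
Your structural induction is correct and is the natural formalization of what the paper leaves implicit — the paper gives no proof, merely asserting the observation ``is implied by the definitions.'' You correctly identify the one point that requires care: the $\ODD$/$\EVEN$ splitting of both the processor list and the failure-pattern bits makes the two spawned sub-executions disjoint and thus truly parallel on the shared clock, justifying the $1+\max$ recursion that matches tree height; the $\WU$ self-loop case is handled correctly as well.
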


\subsubsection{Coupling the Models}
\label{subsec:monotonicity}

Here and in the remainder of the paper we abbreviate $\HGP(n,\vecb)=\HGT(\TGP(n,\vecb))$ and $\HWU(n,\vecb)=\HGT(\TWU(n,\vecb))$.

The main coupling argument is the following lemma, whose inductive proof makes use of the fact that both functions $\HGP$ and $\HWU$ are monotone increasing in their first argument, i.e., the number of processors to be informed.

\begin{lemma}
\label{lem:GP_WU}
For each system configuration $(n,\vecb)$ it holds that $\HGP(n,\vecb)\le \HWU(n,\vecb)$.
\end{lemma}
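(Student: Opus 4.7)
The plan is to prove the inequality $\HGP(n,\vecb)\le\HWU(n,\vecb)$ by induction on $n$. The base case $n=0$ is trivial since both execution trees reduce to a single leaf of height~$0$. For the inductive step I would split on the first bit of $\vecb$. If $\vecb=1\vecc$, the two recursions agree structurally: both trees have children with first coordinates $\CEIL{(n-1)/2}$ and $\FLOOR{(n-1)/2}$, each strictly less than $n$, so applying the induction hypothesis to each subtree and taking $1+\max$ on both sides closes this case. The only subtle case is $\vecb=0\vecc$, where $\HGP(n,0\vecc)=1+\HGP(n-1,\vecc)$ while $\HWU(n,0\vecc)=1+\HWU(n,\vecc)$: GP discards the failed node from its list, whereas WU retries it on the same list.

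To bridge this asymmetry I would first establish as a separate claim that $\HWU$ is monotone non-decreasing in its first argument, i.e.\ $\HWU(k,\vecc)\le\HWU(k',\vecc)$ whenever $k\le k'$. Granted this, the failure case closes by chaining $\HGP(n-1,\vecc)\le\HWU(n-1,\vecc)$ (induction hypothesis) with $\HWU(n-1,\vecc)\le\HWU(n,\vecc)$ (monotonicity), followed by adding the shared $+1$.

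The main obstacle is the monotonicity claim itself, because $\TWU(k',\vecc)$ may be infinite (self-loops arise when some branch of the execution sees an infinite tail of failed requests). The infinite case is vacuous since $\HWU(k',\vecc)=\infty$ dominates anything. In the finite case I would induct on the value of $\HWU(k',\vecc)$: when it equals $0$ we must have $k'=0$, which forces $k=0$ and equality; otherwise, peeling off the first bit of $\vecc$ reduces both sides to strictly smaller heights, and for the success bit one additionally uses that $k\mapsto\CEIL{(k-1)/2}$ and $k\mapsto\FLOOR{(k-1)/2}$ are themselves non-decreasing. A naive induction on $n$ alone would fail here, since the WU failure recursion does not decrement $n$; inducting on tree height is precisely what makes the monotonicity argument go through, and once the monotonicity lemma is in hand the proof of Lemma~\ref{lem:GP_WU} is the clean two-case induction on $n$ described above.
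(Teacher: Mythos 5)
Your proof is correct and follows essentially the same route as the paper's: induction on $n$, with monotonicity of $\HWU$ in its first argument as the key auxiliary lemma used to bridge the failure case $\vecb=0\vecc$. The only cosmetic differences are that the paper peels off the entire initial run of zeros in one step (jumping directly to the first nonzero bit) rather than one bit at a time, and establishes the monotonicity lemma by observing that $\TWU(n,\vecb)$ is isomorphic to a subgraph of $\TWU(n+1,\vecb)$ rather than by induction on tree height.
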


For proving Lemma~\ref{lem:GP_WU}, we first observe that both $\HGP$ and $\HWU$ are monotone increasing in the number of uninformed processors.

\begin{lemma}
\label{lem:Monotone_n}
Let $h\in \{\HGP,\HWU\}$. The function $h$ is monotone increasing in its first argument.
\end{lemma}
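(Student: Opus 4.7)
The plan is to establish the (apparently stronger but in fact equivalent) statement
\[
h(n,\vecb)\;\le\; h(n+1,\vecb) \qquad \text{for every } n\ge 0 \text{ and every infinite binary vector } \vecb,
\]
for both $h=\HGP$ and $h=\HWU$; iterating this inequality gives monotonicity in the first argument. In each case the proof is a case analysis on $b_1$, following the recursive definition of $\TGP$ and $\TWU$ given in Section~\ref{subsub:patterns}.

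For $h=\HGP$, I would proceed by strong induction on $n$. The base case $n=0$ is immediate since $\HGP(0,\vecb)=0\le \HGP(1,\vecb)$. For the inductive step, if $\vecb=0\vecc$, the recursion gives $\HGP(n,0\vecc)=1+\HGP(n-1,\vecc)$ and $\HGP(n+1,0\vecc)=1+\HGP(n,\vecc)$, and the IH applied at $n-1<n$ closes the case. If $\vecb=1\vecc$, I compare
\[
1+\max\bigl(\HGP(\CEIL{(n-1)/2},\ODD(\vecc)),\,\HGP(\FLOOR{(n-1)/2},\EVEN(\vecc))\bigr)
\]
with the analogous expression for $n+1$. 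The pairs $(\CEIL{(n-1)/2},\CEIL{n/2})$ and $(\FLOOR{(n-1)/2},\FLOOR{n/2})$ differ by at most $1$ and all involved arguments are strictly less than $n$ for $n\ge 1$, so the IH applied term-by-term yields the desired inequality inside each $\max$.

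For $h=\HWU$ the situation is more delicate, because the recursion $\HWU(n,0\vecc)=1+\HWU(n,\vecc)$ does \emph{not} decrease $n$, so a plain induction on $n$ fails. The fix is to induct on the (finite) value $H:=\HWU(n+1,\vecb)$ instead. If $H=\infty$ (equivalently, $\TWU(n+1,\vecb)$ contains a self-loop) the inequality is vacuous, so assume $H<\infty$. If $\vecb=1\vecc$, exactly the same argument as in the $\HGP$ case applies, with every recursive $\HWU$ call occurring at height $\le H-1<H$, so the IH is applicable. If $\vecb=0\vecc$, then $\HWU(n+1,\vecc)=H-1<H$, so the IH is applicable at $(n,\vecc)$ and gives $\HWU(n,\vecc)\le \HWU(n+1,\vecc)$; adding $1$ to both sides yields $\HWU(n,0\vecc)\le\HWU(n+1,0\vecc)$.

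The main obstacle is exactly the one described above: for $\HWU$, a failed request consumes a bit of $\vecb$ without making progress on $n$, so there is no single syntactic quantity among $(n,\vecb)$ that is forced to decrease along the recursion. The observation that the \emph{value} $\HWU(n+1,\vecb)$ itself decreases at every recursive unfolding, together with the trivial treatment of the infinite case, is what makes the induction go through uniformly.
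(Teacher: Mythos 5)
Your proof is correct, but it takes a noticeably different route from the paper's. The paper disposes of this lemma in one line: it observes that $\TGP(n,\vecb)$ is isomorphic to a (proper) subtree of $\TGP(n+1,\vecb)$ rooted at the root, and likewise $\TWU(n,\vecb)$ to a subgraph of $\TWU(n+1,\vecb)$, whence the heights compare immediately. Your proof instead unfolds the defining recursions for $\HGP$ and $\HWU$ directly and runs an induction. The two arguments are clearly cognate --- the subtree/subgraph containment is proved by exactly the same case split on $b_1$ that you perform --- but the structural phrasing has a real advantage that your write-up highlights by contrast: the subtree observation handles $\HGP$ and $\HWU$ uniformly, and is indifferent to whether heights are finite, so it needs neither the separate treatment of the self-loop case nor the non-obvious choice to induct on $H=\HWU(n+1,\vecb)$ rather than on $n$. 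Your identification of that obstacle --- that the $\HWU$ recursion on a failed request consumes a bit of $\vecb$ without decreasing $n$, so plain induction on $n$ does not terminate --- is exactly right, and your fix (inducting on the finite height value, with the infinite case vacuous) is sound; it is simply more machinery than the paper needs. One small polish: you should state the base case of the $H$-induction (or observe that $n=0$ is trivial since $\HWU(0,\vecb)=0$), since as written the base case is left implicit.
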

This lemma follows immediately from the observation that for all $\vecb$ and $n$, $\TGP(n,\vecb)$ is isomorphic to a proper subtree of $\TGP(n+1,\vecb)$, and $\TWU(n,\vecb)$ is isomorphic to a proper subgraph of $\TWU(n+1,\vecb)$.

We are now ready to prove the main coupling argument, Lemma~\ref{lem:GP_WU}.

\begin{proof}[Proof of Lemma~\ref{lem:GP_WU}]
For $n=0$ and for all vectors $\vecb\in \{0,1\}^{\N}$ we have $$\HGP(0,\vecb)=0= \HWU(0,\vecb)\,.$$
For the all-zeros vector $\vecb=\vec{0}$ and for all $n>0$ it holds that $$\HGP(n,\vec{0})=n<\infty= \HWU(n,\vec{0})\,.$$
We proceed by induction on $n$, assuming $\vecb\neq\vec{0}$.
Let $b_k$ be the first non-zero element in $\vecb$ (for some $k\ge 1$). It follows that
\begin{align*}
 \HGP(n,\vecb)=
 & k+\max\{\HGP\left(\CEIL{\tfrac{n-k }{2}},\ODD(b_{k+1},...)\right),\\
 & \quad \quad \quad \HGP\left(\FLOOR{\tfrac{n-k }{2}},\EVEN(b_{k+1},...)\right)\}\,,
 \intertext{which, by induction hypothesis, can be bounded from above by}
  						& k+\max\{\HWU\left(\CEIL{\tfrac{n-k }{2}},\ODD(b_{k+1},...)\right),\\
  						& \quad \quad \quad \HWU\left(\FLOOR{\tfrac{n-k }{2}},\EVEN(b_{k+1},...)\right)\}\,,\\
 \intertext{which, by Lemma~\ref{lem:Monotone_n}, is itself bounded from above by}
  						& k+\max\{\HWU\left(\CEIL{\tfrac{n-1}{2}},\ODD(b_{k+1},...)\right),\\
  						& \quad \quad \quad \HWU\left(\FLOOR{\tfrac{n-1}{2}},\EVEN(b_{k+1},...)\right)\}\\
  						=&\HWU(n,\vecb)\,.
 \end{align*}
 \end{proof}

Lemma~\ref{lem:GP_WU} and Observation~\ref{obs:coupling} show that, for any initial configuration $(n,\vecb)$, the execution $\EXGP(n,\vecb)$ of the
$\GP$ algorithm
is at least as fast as the execution $\EXWU(n,\vecb)$ of the $\WU$ algorithm.
This implies that for any probability distribution $D$ on $\{0,1\}^\N$, if $\vecb$ is sampled from $D$ then $\Pr[\HGP(n,\vecb)\le H]\ge\Pr[\HWU(n,\vecb)\le H]$.
By letting $D$ be the standard distribution on $\{0,1\}^\N$ with success probability $p$, Theorem~\ref{thm:wakeuprandomfailure} easily implies the following.

\begin{theorem}
\label{thm:fnp}
Let $c>1$ be a constant.
The execution time of the $\GP$ algorithm with success probability $p \in (0,1)$ is at most
$\frac{c}{p} (\lceil \log(n-1) \rceil+1)$, with probability at least
$1-n\exp\left(-\frac{(c-1)^2}{2c} (\lceil \log(n-1) \rceil -1) \right)$.
\end{theorem}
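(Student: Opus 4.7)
The plan is to deduce this theorem essentially for free by combining the coupling inequality of Lemma~\ref{lem:GP_WU} with the wakeup-model tail bound of Theorem~\ref{thm:wakeuprandomfailure}. All the technical work (the geometric Chernoff bound on each root-to-leaf path in the labeled binary tree, and the structural domination $\HGP(n,\vecb)\le\HWU(n,\vecb)$) has already been done, so the remaining argument is just a translation through the embedding into $\{0,1\}^{\mathbb{N}}$.

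First, I would make the probabilistic setup explicit. Equip $\{0,1\}^{\mathbb{N}}$ with the product Bernoulli$(p)$ measure, and sample $\vecb$ from this measure. Under this sampling, the induced execution $\EXGP(n,\vecb)$ of the $\GP$ algorithm has exactly the distribution of the $\GP$ algorithm in the random failure model with success rate $p$ (each request independently succeeds with probability $p$, and a failed request marks the called processor as crashed for the remainder of the run, consuming one coordinate of $\vecb$ per request issued). Hence by the first item of Observation~\ref{obs:coupling}, the runtime of the $\GP$ algorithm in this model is distributed as $\HGP(n,\vecb)$.

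Second, I would invoke the pointwise coupling: by Lemma~\ref{lem:GP_WU}, $\HGP(n,\vecb)\le \HWU(n,\vecb)$ for every $\vecb\in\{0,1\}^{\mathbb{N}}$. Monotonicity of probability under pointwise domination then gives, for any threshold $H$,
\[
\Pr[\HGP(n,\vecb)>H]\le \Pr[\HWU(n,\vecb)>H].
\]

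Finally, I would set $H=\frac{c}{p}(\lceil \log(n-1)\rceil+1)$ and quote Theorem~\ref{thm:wakeuprandomfailure}, which bounds $\Pr[\HWU(n,\vecb)>H]$ by $n\exp\!\left(-\frac{(c-1)^2}{2c}(\lceil \log(n-1)\rceil-1)\right)$; complementing yields exactly the claimed success probability. There is no real obstacle here — the only thing to double-check is that the distribution of $\EXGP(n,\vecb)$ under the Bernoulli$(p)$ product measure agrees with the random node failure model, which is a short bookkeeping check since each processor is consulted at most once in $\GP$, so the independent-request model and the independent-node-failure model coincide on the relevant coordinates.
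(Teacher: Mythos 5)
Your proposal is correct and follows essentially the same route as the paper: the pointwise domination $\HGP(n,\vecb)\le\HWU(n,\vecb)$ from Lemma~\ref{lem:GP_WU} is pushed through the product Bernoulli$(p)$ measure to get $\Pr[\HGP>H]\le\Pr[\HWU>H]$, and Theorem~\ref{thm:wakeuprandomfailure} bounds the right-hand side. You make slightly more explicit than the paper does the bookkeeping check that $\EXGP(n,\vecb)$ under the product measure coincides with the random-node-failure model (each processor is called at most once, so one coordinate per node suffices), which is a welcome clarification but not a deviation in approach.
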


\section{Adversarial Failures in the Randomized GP-Protocol}
\label{sec:fnm}

In this section we aim at analyzing adversarial failures.
As mentioned in Lemma~\ref{lem:GPfailure}, it has been proven in~\cite{GasPelc96} that the time complexity of the $\GP$ algorithm is at most $f+\lceil \log (n-f)\rceil$ when the number of failures is at most $f$. This bound is sharp when the first $f$ nodes fail.
For $f=\omega(\log n)$, this bound is not satisfactory in environments where random coin flips are accessible. As we shall show in this section, adding randomization allows to disseminate the rumor to all nodes in $O(\log n)$ rounds, with high probability, no matter which constant fraction of the nodes fails. Our protocol can best be described as a randomized version of the basic $\GP$ algorithm.

This randomized protocol works as follows.
When the rumor is injected at processor $0$, this processor picks a permutation $\pi \in S_{n-1}$ uniformly at random.
In round one it tries to contact processor $\pi(1)$.
If this processor has a failure, processor $0$ sends a communication request to processor $\pi(2)$ in round two.
Otherwise, i.e., if processor $\pi(1)$ is not failed,
processor $0$ sends to it the rumor and appends to this rumor the list
$\EVEN(\pi(2), \ldots, \pi(n-1))$.
Processor $0$ keeps the list $\ODD(\pi(2), \ldots, \pi(n-1))$ as its own todo-list.
The protocol continues as described in Section~\ref{subsec:GP}.
That is, all we have changed in our \emph{randomized version of the $\GP$ algorithm} is to substitute the list of processor $0$---which is
$(1,\ldots, n-1)$ in the original $\GP$ algorithm---by $(\pi(1), \ldots, \pi(n-1))$,
where $\pi$ is a random permutation of $[n-1]$.
We also have to append information on $\pi$ when transferring the rumor. It is not difficult to see (see Section~\ref{sec:fnmbits} below) that this requires a total number of $\Theta(n \log^2 n)$ bits that are appended to the rumors---compared to $\Theta(n\log n)$ in the $\GP$ algorithm. The maximum length of an individual message appendix is $n$ bits.

Here and in the remainder of this section we assume, as in all other parts of this work, that the processor initially holding the rumor, node $0$, does not fail.
Recall that in our initial node failure model, a processor either is a failed one or it does work throughout the execution.

Before we analyze the time complexity of the randomized $\GP$ algorithm, let us briefly discuss its \emph{bit complexity}; i.e., the number of bits needed to encode the lists that are appended to the initial rumor.

\subsection{The Bit Complexity of the Randomized \texorpdfstring{$\GP$}{GP} Algorithm}
\label{sec:fnmbits}

In a na\"{\i}ve implementation of the randomized $\GP$ algorithm,
every processor passes to its neighbor the
list of nodes to be informed by that processor.
As described above, in such an implementation, node 0 would pass to node $\pi(1)$ the list
$\EVEN(\pi(2), \ldots, \pi(n-1))$ of length smaller than $n/2$.
This requires $O(n \log n)$
bits to be appended to the initial rumor.
Since the length of the list halves with each successful communication request,
let the {\em level} of processor 0 be 0, and for $t>0$, a processor is at level $t$ if it was waken up by a processor at level $t-1$.
In every level of the execution tree the total number of bits that need to be communicated is $O(n \log n)$:
For $t\leq \log (n-1)$, in the $t$th level, there are
at most $2^t$ informed processors, all of which send the rumor to their descendants.
Each such processor needs to append a list of length at most $n/2^{t+1}$.
This makes a total number of $O(n \log n)$ additional bits that need to be communicated on the $t$th level.
Since there are $O(\log n)$ levels in total, the total bit complexity of this implementation is $O(n \log^2 n)$.

Another implementation of the randomized $\GP$ algorithm with the same asymptotic bit complexity but a smaller \emph{maximal} appendix is the following.
If a processor needs to communicate to its neighbor a list $L=(j_1,\ldots,j_k)$ of length $k>n/\log n$,
it appends to the rumor the {\em incidence list} of $L$; i.e. a $0/1$ vector $x$ of length $n-1$ with $x_{i}=1$ if $i \in \{j_1, \ldots, j_k\}$ and $x_i=0$ otherwise.
If a processor has received such a rumor with appended \emph{task list} $x$, it creates a random permutation $\pi_x$ of the indices $\{ i \mid x_i =1 \}$.
It then proceeds as usual, trying to spread the rumor to processor $\pi_x(1)$ in the next round.
If less than $n/\log n$ indices need to be communicated, it is cheaper to pass the list itself.
It is easily verified, using similar arguments as above, that this implementation yields a total bit complexity of $O(n \log^2 n)$.
The length of the longest appendix is linear in $n$. That is, while the total bit complexity is comparable up to a $\log n$ factor to that of the GP algorithm, the maximum message size can be significantly larger in the randomized protocol.
In Sections~\ref{sec:derandomize} and~\ref{sec:kwiseind} we describe two alternative algorithms in which the maximal size of a message appendix is in the order of $\log n$ and $\log^2 n$ bits, respectively.

\subsection{The Time Complexity of the Randomized \texorpdfstring{$\GP$}{GP} Algorithm}
\label{sec:fnmtime}

For bounding the time complexity of the randomized $\GP$ algorithm we first show that $\HGP(n,\vecb)$, the time complexity of this algorithm for given $n$ and $\vecb$, is monotone decreasing in the failure pattern $\vecb$, according to the following natural partial order on binary sequences:
 $(b_1,b_2,...) \leq (c_1,c_2,...)$ if and only if for all $i\in\mathbb{N}$ we have $b_i\leq c_i$.

\begin{lemma}
\label{lem:Monotone}
 The function $\HGP(\cdot,\cdot):\N_0 \times \{0,1\}^\N \rightarrow \R$ is monotone decreasing in its second argument. That is, for any failure pattern $\vecb$, replacing failed processors by non-faulty ones  cannot increase the time complexity.
\end{lemma}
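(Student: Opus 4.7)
My plan is a strong induction on $n$; the case $n=0$ is immediate. For the inductive step, let $\vecb\le\vecc$ and split on $(b_1,c_1)$. When $b_1=c_1$, unfolding the recursive definition of $\TGP$ reduces both sides to the same shape---$1+\HGP(n-1,\cdot)$ if both are $0$, or $1+\max\{\HGP(\lceil(n-1)/2\rceil,\ODD(\cdot)),\HGP(\lfloor(n-1)/2\rfloor,\EVEN(\cdot))\}$ if both are $1$---and the inductive hypothesis, together with the observation that $\ODD$ and $\EVEN$ preserve the pointwise order on $\{0,1\}^{\N}$, finishes these cases.

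The essential case is $(b_1,c_1)=(0,1)$, which I handle by interposing $\vecc^{-}:=(0,c_2,c_3,\ldots)$. Since $\vecb\le\vecc^{-}$ share their first bit, the previous case gives $\HGP(n,\vecb)\ge\HGP(n,\vecc^{-})$, so it suffices to prove the single-bit-flip statement $\HGP(n,\vecc^{-})\ge\HGP(n,\vecc)$. After cancelling the common initial $1$, this is equivalent to the sublemma
\[
 \HGP(m,\vecd)\ \ge\ \max\Bigl\{\HGP\bigl(\lceil m/2\rceil,\ODD(\vecd)\bigr),\ \HGP\bigl(\lfloor m/2\rfloor,\EVEN(\vecd)\bigr)\Bigr\},
\]
valid for all $m\ge 0$ and all $\vecd\in\{0,1\}^{\N}$, which I then apply with $m=n-1$ and $\vecd=(c_2,c_3,\ldots)$.

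I prove the sublemma by a separate induction on $m$. The identities $\lfloor(m-1)/2\rfloor=\lceil m/2\rceil-1$ and $\lceil(m-1)/2\rceil=\lfloor m/2\rfloor$ make the sizes that arise from unfolding one step of $\GP$ on the left-hand side agree with the sizes appearing on the right-hand side. Using a single application of the sublemma's inductive hypothesis to one of the children of the LHS's root and invoking Lemma~\ref{lem:Monotone_n} to absorb any remaining size slack, each of the two terms of the right-hand maximum can be bounded by a subexpression already present in the LHS. The principal obstacle is exactly this bookkeeping---carefully verifying how $\ODD/\EVEN$ commutes with $\GP$'s divide-and-conquer recursion after the ceiling/floor identifications, and checking that one has to descend into the correct child of the LHS root so that the bit streams actually match. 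Once the sublemma is in hand, the three-case induction on $n$ delivers the claim.
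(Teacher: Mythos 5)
Your proof is correct and matches the paper's approach: the sublemma you formulate is exactly the paper's Lemma~\ref{lem:splitting}, and your outer induction on $n$ plus the floor/ceiling identities mirrors the paper's argument, with the $(b_1,c_1)=(0,1)$ case handled by the same two ingredients (inductive monotonicity on a shorter tail and the splitting inequality) merely applied in the opposite order via your interpolant $\vecc^{-}$. One small note: Lemma~\ref{lem:Monotone_n} is not actually needed in the sublemma's proof, since the identities $\lfloor(m-1)/2\rfloor=\lceil m/2\rceil-1$ and $\lceil(m-1)/2\rceil=\lfloor m/2\rfloor$ make the sizes align exactly, leaving no slack to absorb.
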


The proof of Lemma~\ref{lem:Monotone} uses the following statement, which---informally---says that for each possible failure pattern $\vecb$, splitting the rumor spreading at the very beginning between two processors cannot increase the time complexity of the execution.

\begin{lemma}
\label{lem:splitting}
 For all $n \in \N_0$ and all $\vecb \in \{0,1\}^\N$ it holds that
\begin{align}
\HGP(n,\vecb)\ge \max\{
&\HGP(\CEIL{\tfrac{n}{2}},\ODD(\vecb)), \nonumber\\
& \HGP(\FLOOR{\tfrac{n}{2}},\EVEN(\vecb))\}\,.\label{eq:split}
\end{align}
\end{lemma}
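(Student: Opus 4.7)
The plan is to prove inequality~\eqref{eq:split} by induction on $n$. The base case $n=0$ is immediate since both sides equal $0$. For the inductive step I fix $n \geq 1$ and split on whether the first bit of $\vecb$ is $0$ or $1$, in each case using the recursive unfolding of $\TGP$ from Section~\ref{subsub:patterns} to rewrite $\HGP$ on both sides in terms of $\HGP$ at smaller configurations, and then matching the pieces via the induction hypothesis.

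If $\vecb = 0\vecc$, then $\HGP(n,\vecb) = 1 + \HGP(n-1,\vecc)$. The relevant identities are $\EVEN(0\vecc) = \ODD(\vecc)$, $\ODD(0\vecc) = 0\cdot\EVEN(\vecc)$, $\CEIL{(n-1)/2} = \FLOOR{n/2}$, and $\FLOOR{(n-1)/2} = \CEIL{n/2}-1$. Using them, $\HGP(\FLOOR{n/2}, \EVEN(\vecb)) = \HGP(\CEIL{(n-1)/2}, \ODD(\vecc))$, and after unfolding the leading $0$ of $\ODD(\vecb)$ one has $\HGP(\CEIL{n/2}, \ODD(\vecb)) = 1 + \HGP(\FLOOR{(n-1)/2}, \EVEN(\vecc))$. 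The induction hypothesis applied to $(n-1,\vecc)$ bounds both $\HGP(\CEIL{(n-1)/2}, \ODD(\vecc))$ and $\HGP(\FLOOR{(n-1)/2}, \EVEN(\vecc))$ by $\HGP(n-1,\vecc)$, which combined with the $1+\HGP(n-1,\vecc)$ on the left-hand side closes both inequalities.

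If instead $\vecb = 1\vecc$, then $\HGP(n,\vecb) = 1 + \max\{\HGP(\CEIL{(n-1)/2}, \ODD(\vecc)),\HGP(\FLOOR{(n-1)/2}, \EVEN(\vecc))\}$. The bound on $\HGP(\FLOOR{n/2}, \EVEN(\vecb))$ is immediate from $\EVEN(1\vecc) = \ODD(\vecc)$ together with $\CEIL{(n-1)/2} = \FLOOR{n/2}$, which identify $\HGP(\FLOOR{n/2}, \EVEN(\vecb))$ with the first term in the max and let the $+1$ absorb the difference. For the bound on $\HGP(\CEIL{n/2}, \ODD(\vecb))$, the identity $\ODD(1\vecc) = 1\cdot\EVEN(\vecc)$ and one further unfolding step express $\HGP(\CEIL{n/2}, \ODD(\vecb))$ as $1 + \max\{\HGP(\CEIL{(m-1)/2}, \ODD(\EVEN(\vecc))),\HGP(\FLOOR{(m-1)/2}, \EVEN(\EVEN(\vecc)))\}$ with $m = \CEIL{n/2}$; since $m-1 = \FLOOR{(n-1)/2}$, applying the induction hypothesis to $(m-1, \EVEN(\vecc))$ shows that this maximum is already dominated by $\HGP(\FLOOR{(n-1)/2}, \EVEN(\vecc))$, which is one of the terms inside the max on the left-hand side.

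The main obstacle is purely bookkeeping: the $\CEIL/\FLOOR$ identities and the way $\ODD/\EVEN$ interact with the leading bit of $\vecb$ must be tracked with care, and the induction hypothesis must be invoked on exactly the right intermediate configuration. No new conceptual ingredient beyond this careful accounting is needed.
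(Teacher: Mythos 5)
Your proof is correct and follows essentially the same route as the paper's: induction on $n$, a case split on the leading bit of $\vecb$, the same $\CEIL{\cdot}/\FLOOR{\cdot}$ identities, and the same placement of the induction hypothesis (on $(n-1,\vecc)$ in the $0\vecc$ case and, after one extra unfolding, on $(\FLOOR{(n-1)/2},\EVEN(\vecc))$ in the $1\vecc$ case). The only cosmetic difference is that the paper also lists $n=1$ explicitly as a base case, which your argument handles within the inductive step.
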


\begin{proof}[Proof of Lemma~\ref{lem:splitting}]
The proof is
by induction on $n$. The lemma clearly holds for $n=0$ and $n=1$. So let $n\ge 2$.
Assume first that $\vecb=0 \vecc$. Then by the definition of $\TGP$,
$$\HGP(n,0\vecc) = 1+\HGP(n-1,\vecc).$$
Using the identities $\CEIL{\tfrac{k}{2}}-1=\FLOOR{\tfrac{k-1}{2}}$ and $\FLOOR{\tfrac{k}{2}}=\CEIL{\tfrac{k-1}{2}}$, we also have
\begin{align*}
&\HGP(\CEIL{\tfrac{n}{2}},\ODD(0\vecc)) = 
 1+\HGP(\FLOOR{\tfrac{n-1}{2}},\EVEN(\vecc)),
\\
&\HGP(\FLOOR{\tfrac{n}{2}},\EVEN(0\vecc)) = 
 \HGP(\CEIL{\tfrac{n-1}{2}},\ODD(\vecc))\,,
\end{align*}
which implies (\ref{eq:split}) by induction.

 The case $\vecb=1\vecc$ follows along the same lines. To simplify the notations for this case,   define $n_1=\CEIL{\tfrac{n-1}{2}}, n_2=  \FLOOR{\tfrac{n-1}{2} }, \vecd=\ODD(\vecc)$, and $\vece=\EVEN(\vecc)$. Then by the definition of $\TGP$,
 $$\HGP(n,1\vecc) = 1+\max\{\HGP(n_1 ,\vecd),\HGP(n_2 , \vece)\}.$$
And the inductive step follows from the inequalities
\begin{align*}
  & \HGP(\CEIL{\tfrac{n}{2}},\ODD(1\vecc)) \\
 & \quad = 1+
  \max\{\HGP(\CEIL{\tfrac{n_2}{2}} ,\ODD( \vece)),\\
  & \quad \quad \quad \quad \quad \quad \HGP(\FLOOR{\tfrac{n_2}{2}} ,\EVEN( \vece))\} \\
  &\quad \le   1+\HGP( n_2 , \vece) \text{ (by induction hypothesis),}
 \end{align*}
and
  $\HGP(\FLOOR{\tfrac{n}{2}},\EVEN(1\vecc)) =  
\HGP(n_1,\vecd)$.
  \end{proof}

\begin{proof}[Proof of Lemma~\ref{lem:Monotone}]
The proof  is  by induction on $n$.
For $n=0$ we have that $\HGP(0,\vecb)=0$ for all $\vecb$,  and the lemma trivially holds.
For the induction step, let $n\ge 0$ and let $\vecb,\vecc$ be two vectors such that $\vecb\le\vecc$. Then $\vecb=b_1\vecd$ and $\vecc=c_1\vece$, where $b_1\le c_1$ and $\vecd\le \vece$.
If $b_1=c_1=0$ then $\vecb=0\vecd,\vecc=0\vece$ and the induction step holds by
\begin{align*}
\HGP(n+1,0\vecb)
& =1+\HGP(n,\vecd)
\ge 1+\HGP(n,\vece)\\
& =\HGP(n+1,0\vecc)\,,
\end{align*}
where the inequality follows from the induction hypothesis. The case $b_1=c_1=1$ is similar and omitted. So we are left with the case $\vecb=0\vecd,\vecc=1\vece$ with $\vecd\le\vece$. In this case we have
\begin{align*}
 & \HGP(n+1,0\vecd) \\
  & \quad =1+\HGP(n,\vecd)\\
 &\quad \ge
 1+   \max\{\HGP(\CEIL{\tfrac{n}{2}},\ODD(\vecd)),\\
 & \quad\quad\quad\quad\quad\quad \HGP(\FLOOR{\tfrac{n}{2}},\EVEN(\vecd))\} \\
 &\quad \ge 1+ \max\{\HGP(\CEIL{\tfrac{n}{2}},\ODD(\vece)),\\
 & \quad\quad\quad\quad\quad\quad \HGP(\FLOOR{\tfrac{n}{2}},\EVEN(\vece))\}\\
 &\quad= \HGP(n+1,1\vece),
 \end{align*}
where the first inequality follows from Lemma~\ref{lem:splitting}, and the latter inequality follows from the induction hypothesis on $\ODD(\vecd),\ODD(\vece)$ and on $\EVEN(\vecd),\EVEN(\vece)$.
 \end{proof}

{\bf Note:} A similar but slightly more involved argument shows that Lemma~\ref{lem:Monotone} holds also for the function $\HWU(\cdot,\cdot)$.

\begin{theorem}
\label{thm:fnm}
Let $n \in \N\setminus \{1\}$.
Let $\varepsilon = \sqrt{\frac{\ln n}{n-1}}$.
Let $f < (n-1)(1-\varepsilon)$ and let $F \subseteq [n-1]$ of size $|F|=f$.
Let $p = 1-\frac{f}{n-1}$.
Let $c>1$ be a constant.

The probability that the randomized version of the $\GP$ algorithm has time complexity
$T \leq \frac{c}{p-\varepsilon} (\lceil \log(n-1) \rceil+1)$
is at least\\
$1-\frac{n^3}{n^2-1}\exp\left(-\frac{(c-1)^2}{2c} (\lceil \log(n-1) \rceil -1) \right)$,
even if all processors in $F$ fail.
\end{theorem}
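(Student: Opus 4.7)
The plan is to bound, for each non-failed processor $j$, the time $t_j$ at which $j$ is informed, and then apply a union bound over $j$. The key observation is that, due to the uniformly random permutation $\pi$, the sub-list at every vertex of the execution tree is (conditional on the history) in a uniformly random order over a uniformly random subset of the remaining processors. This uniformity, together with Hoeffding's inequality for the hypergeometric distribution, will let me couple the wait times at the successful vertices on the path to $j$ with geometric random variables of parameter $p-\varepsilon$, and then apply Theorem~\ref{thm:geometricconcentration}.

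First I would fix a non-failed $j$ and let $P_j$ be the path from the root to $j$'s leaf in the execution tree. Let $V_1,\dots,V_\ell$ denote the successful vertices on $P_j$ in order. Each success halves the current sub-list size, so $\ell \le \lceil\log(n-1)\rceil + 1$, and the time $t_j$ equals $\ell + \sum_{i=1}^\ell W_i$, where $W_i$ is the number of failed vertices between $V_{i-1}$ and $V_i$ on $P_j$ (taking $V_0$ to be the root). Using the uniformity above and the standard bound for sampling without replacement, $W_i$ conditional on the sub-list $L_i$ at $V_i$ satisfies $\Pr[W_i\ge r \mid L_i] \le (g_i/N_i)^r$, where $g_i$ and $N_i$ are the failure count and size of $L_i$. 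Hence on the event $\mathcal G := \{g_i/N_i \le 1-(p-\varepsilon)\text{ for all } i\le \ell\}$, $W_i+1$ is stochastically dominated by a geometric random variable with success probability $p-\varepsilon$, so $t_j$ is stochastically dominated by $\sum_{i=1}^\ell G_i$ for i.i.d.\ such geometrics $G_i$.

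To control $\Pr[\mathcal G^c]$ I would use Hoeffding's inequality for the hypergeometric: since the sub-list's content is a uniformly random subset of the remaining processors, its failure fraction deviates from $1-p$ by more than $\varepsilon$ with probability at most $\exp(-2\varepsilon^2 N_i)$. The choice $\varepsilon = \sqrt{\ln n/(n-1)}$ is calibrated so that this is $\le 1/n^2$ at the full-size list, and a careful union bound along the short path $P_j$ keeps $\Pr[\mathcal G^c]$ small enough to be absorbed in the multiplicative $n^2/(n^2-1)$ correction appearing in the stated error. Conditional on $\mathcal G$, Theorem~\ref{thm:geometricconcentration} applied with $1+\delta = c$ to the path of length $\ell \le \lceil\log(n-1)\rceil+1$ gives
\[
\Pr\Bigl[t_j > \tfrac{c}{p-\varepsilon}(\lceil\log(n-1)\rceil+1) \;\Big|\; \mathcal G\Bigr] \le \exp\!\left(-\tfrac{(c-1)^2}{2c}(\lceil\log(n-1)\rceil - 1)\right).
\]
A union bound over the at most $n$ non-failed processors $j$, combining this conditional bound with $\Pr[\mathcal G^c]$, yields the stated overall bound.

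The step I expect to be the main obstacle is the uniform control of the failure fraction in every sub-list on $P_j$: Hoeffding's concentration weakens as the sub-list shrinks, so some care is needed to either restrict attention to sub-lists large enough that concentration is strong, or to argue that small sub-lists contribute only a bounded number of extra rounds. The precise form of the $n^3/(n^2-1) = n\cdot n^2/(n^2-1)$ prefactor strongly suggests that this is handled by conditioning on a good event of probability at least $1-1/(n^2-1)$ and then normalizing, rather than by a straightforward additive error term.
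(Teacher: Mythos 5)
Your plan is a genuinely different route from the paper's, and it has a real gap that you yourself identify but do not close. The paper's proof never attempts a per-level analysis of failure fractions in sub-lists. Instead it makes a single global reduction: (i) it observes that the randomized $\GP$ with adversarial failure set $F$ is distributed identically to the deterministic $\GP$ on a uniformly random $f$-subset of failed indices (i.e.\ a uniformly random $\vecb\in\{0,1\}^{n-1}$ with $|\vecb|_0=f$); (ii) it introduces an i.i.d.\ Bernoulli vector $\vecc$ with success probability $p'=p-\varepsilon$ and shows by a single Chernoff bound that $\Pr[|\vecc|_0\ge f]\ge 1-n^{-2}$; (iii) it invokes the monotonicity of $\HGP$ in the failure pattern (Lemma~\ref{lem:Monotone}), which gives $\Pr[\HGP(n,\vecb)>T]\le \Pr[\HGP(n,\vecc)>T\mid |\vecc|_0\ge f]$ by the natural coupling that flips extra zeros of $\vecc$ to ones; (iv) it then multiplies by $\Pr[|\vecc|_0\ge f]$ and applies Theorem~\ref{thm:fnp}. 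The "normalize by a good event of probability $\ge 1-n^{-2}$" structure that you correctly inferred from the $n^3/(n^2-1)$ prefactor is present, but the good event is the single global event $\{|\vecc|_0\ge f\}$, not a per-level event about sub-list failure fractions. The monotonicity lemma and the wakeup-model coupling (Lemma~\ref{lem:GP_WU}) are precisely what let the paper avoid the per-level control that your plan depends on.

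The gap in your route is exactly where you flag it, and it is not a matter of routine care. Your event $\mathcal G$ requires $g_i/N_i\le 1-(p-\varepsilon)$ for \emph{every} sub-list on the path, but with $\varepsilon=\sqrt{\ln n/(n-1)}$ the Hoeffding bound $\exp(-2\varepsilon^2 N_i)$ is only $n^{-2}$ at $N_i\approx n-1$, degrades to roughly $n^{-1}$ already at the first split $N_i\approx n/2$, and becomes useless once $N_i=o(n/\ln n)$. Since $P_j$ passes through $\Theta(\log n)$ sub-lists whose sizes shrink geometrically to constant, a union bound for $\Pr[\mathcal G^c]$ cannot be made $O(n^{-2})$ (or even $o(1)$ uniformly in $j$) with this $\varepsilon$. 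Your fallback, "argue that small sub-lists contribute only a bounded number of extra rounds," is not obviously true either: a sub-list of size $N_i$ that happens to be entirely failed adds $N_i$ rounds of depth to that path, and in the hypergeometric regime near the threshold $f\approx(1-\varepsilon)(n-1)$ some sub-list of size $\Theta(\log n)$ being entirely failed is not a negligible event. Closing this would require something beyond Hoeffding per level (e.g.\ a different split between "large-list" and "small-list" phases with a separate argument for the tail, or an adaptive choice of $\varepsilon$), and as written the plan does not supply it. The paper's monotonicity-plus-Bernoulli reduction sidesteps all of this: once in the Bernoulli model, the coupling to the wakeup tree (which has a \emph{fixed} shape) makes the per-path geometric analysis of Theorem~\ref{thm:wakeuprandomfailure} go through cleanly, with no need to control failure fractions in random sub-lists at all.
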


As we mentioned after Theorem~\ref{thm:wakeuprandomfailure}, the probability bound is $1-o(1)$ for $c$ satisfying $\frac{(c-1)^2}{2c \ln(2)}>1$.
The proof of Theorem~\ref{thm:fnm} is via a reduction to the random failure model analyzed in Section~\ref{sub:wakeup}.
It makes use of several Chernoff bounds and the monotonicity proven in Lemma~\ref{lem:Monotone}. We basically show that the runtime of the randomized protocol is not worse than that of the basic deterministic $\GP$ algorithm under the presence of independent random failures. In the latter, we chose the failure probability to be slightly larger than the ``fair'' ratio $f/(n-1)$, so that, with probability at least $1-n^{-2}$, more than $f$ nodes are crashed. Combining this with the resulting runtime bound from Theorem~\ref{thm:fnp} proves Theorem~\ref{thm:fnm}.

\begin{proof}[Proof of Theorem~\ref{thm:fnm}]
It is easy to verify that the randomized $\GP$ algorithm with $f$ adversarial failures has
 the same performance as the original $\GP$ algorithm when a random subset of node failures, $R\subseteq [n-1]$ with $R=f$, is selected uniformly.
We analyze the latter.

Let $p':=p-\varepsilon$ and
$T:=\frac{c}{p'} (\lceil \log(n-1) \rceil+1)$.
Let $\vecb \in \{0,1\}^{n-1}$ with $|\vecb|_0=f$ be chosen uniformly at random.
We need to show that
\begin{align*}
& \Pr[\HGP(n,\vecb) > T]\\
& \quad \leq \frac{n^3}{n^2-1}\exp\left(-\frac{(c-1)^2}{2c} (\lceil \log(n-1) \rceil -1) \right).
\end{align*}

Let $\vecc \in \{0,1\}^{n-1}$ be such that $\Pr[\vecc_i=1]=p'$ independently for all $i \in [n-1]$. That is, the probability that $\vecc_i=0$ is $\frac{f}{n-1}+\varepsilon$, for every $i \in [n-1]$.
We show
\begin{align}
\label{cher:vecc}
\Pr[|\vecc|_0 \geq f]\geq 1-n^{-2}\,,
\end{align}
which can be easily verified by Chernoff's bound:
The expected value of $|\vecc|_0$ is $f+\varepsilon (n-1)$.
By Chernoff's bound, cf. Theorem~\ref{thm:Chernoff}, equation~(\ref{cher:i}), we have
\begin{align*}
& \Pr[|\vecc|_0 < f]\\
&\quad =
		\Pr[|\vecc|_0 < \E[|\vecc|_0]-\varepsilon (n-1)]\\
&\quad \leq
	 	\exp\left(- 2(\varepsilon (n-1))^2/(n-1) \right)\\
&\quad = 	\exp\left(- 2 \varepsilon^2 (n-1) \right)
 = 	\exp(-2 \ln n)
 = 	n^{-2}.
\end{align*}

Next we argue that
\begin{align}
\nonumber
& \Pr[\HGP(n,\vecb)\geq T]\\
& \quad \leq
\Pr[\HGP(n,\vecc) \geq T \mid |\vecc|_0 \geq f] \label{ineq:vecb:vecc}.
\end{align}
To verify (\ref{ineq:vecb:vecc}), assume $|\vecc|_0>f$.
Sample $k:=|\vecc|_0-f$ indices $i_1, \ldots, i_k$ from the $0$-positions
$\{i \in [n-1] \mid \vecc_i =0 \}$ of $\vecc$ uniformly at random.
Create $\vecd$ from $\vecc$ by replacing the zeros in positions
$i_1, \ldots, i_k$ by ones.
Then $\vecd$ is uniform in the set
$\{\vecb \in \{0,1\}^{n-1} \mid |\vecb|_0=f \}$, as is $\vecb$.
Inequality (\ref{ineq:vecb:vecc}) follows from the latter and the monotonicity of $\HGP(n,\vecb)$ in $\vecb$, as stated in Lemma~\ref{lem:Monotone}.
			
Using this inequality we bound
\begin{align*}
& \Pr[\HGP(n,\vecb) > T]
			\cdot	\Pr[|\vecc|_0 \geq f]\\
&\quad \leq 				
	\Pr[\HGP(n,\vecc) > T \mid |\vecc|_0 \geq f]
		\cdot	\Pr[|\vecc|_0 \geq f]\\
&\quad \leq 				
	\Pr[\HGP(n,\vecc) > T \mid |\vecc|_0 \geq f]
		\cdot	\Pr[|\vecc|_0 \geq f]\\
	&\quad\quad		
	+\Pr[\HGP(n,\vecc) > T \mid |\vecc|_0 < f]
		\cdot	\Pr[|\vecc|_0 < f]\\
&\quad =
	\Pr[\HGP(n,\vecc) > T]\,.						
\end{align*}
The latter quantity can be bounded by Theorem~\ref{thm:fnp}.
It shows that
the time complexity of the $\GP$ algorithm with success rate $p'$ satisfies
\begin{align*}
& \Pr[\HGP(n,\vecc) > T] \\
& \quad \leq n\exp\left(-\frac{(c-1)^2}{2c} (\lceil \log(n-1) \rceil -1) \right).
\end{align*}
Together with inequality (\ref{cher:vecc}), this concludes the proof.
 \end{proof}

\section{Reducing the Message Size}
\label{sec:derandomize}

Building on the results from the previous sections, we describe in this and the next section two alternative fault-tolerant versions of the $\GP$ algorithm that have a message overhead of only $O(\log n)$ and $O(\log^2 n)$ bits per rumor transfer, respectively. The version with logarithmic message overhead is described in this section. It requires a preprocessing phase and extra storage space at each of the processors. In Section~\ref{sec:kwiseind} we will see how the preprocessing phase can be avoided by increasing the sizes of the appended message by a logarithmic factor.

More precisely, we show here in this section that for $t \in O(n h(n)/\log n)$, $h \in \omega(1)$ being an arbitrary function tending to infinity, there are $t$~permutations such that, no matter which constant fraction of the processors fail, the probability that a permutation chosen uniformly at random out of the~$t$ yields a runtime that is greater than $c \log n$ is $o(1)$ (both the constant $c$ and the $o(1)$ failure probability will be made precise below). The algorithm is based on storing these $t$~permutations at each of the processors.

Let $\{\pi^1,\ldots,\pi^t\}\subseteq S_{n-1}$ be the stored permutations.  Upon receiving a rumor, processor 0 chooses at random an index $r\in[t]$. The algorithm now is the following minor modification of the original $\GP$ algorithm: At each round, a processor $i$ which holds a nonempty list $(j_1,\ldots,j_k)$ sends a communication request to processor $\pi^r(j_1)$, and deletes $j_1$ from its list.
If $\pi^r(j_1)$ is non-faulty, then $i$ sends it the rumor appended with
(a) the index $r$,
(b) the value $j_3$,
(c) the length $\lfloor \frac{k-1}{2} \rfloor$ of the list to be informed by processor $\pi^r(j_1)$, and
(d) the exponent $m$ of the arithmetic progression $\EVEN(j_2, \ldots, j_k)$.
Processor $\pi^r(j_1)$ starts the next round with the list $\EVEN(j_2, \ldots,j_k)$, and processor $i$ starts it with the list $\ODD(j_2, \ldots,j_k)$.

To pass information (b)--(d), $3 (\lceil \log n \rceil +1)$ bits suffice.
To pass information (a), $\lceil \log t \rceil +1$ bits are needed.
Thus, for $t \in O(n^d)$ for a constant $d$, the overall number of bits that need to be appended to the rumor is $O(\log n)$.

As mentioned above, the main goal of this section is to show (Theorem~\ref{thm:derandomized}) that for
$t \in \omega(n/ \log n)$ and suitably chosen permutations $\pi^1, \ldots, \pi^t$ this protocol, with high probability, is robust against adversarial failures.

\begin{definition}
 We call the $\GP(\pi^1, \ldots, \pi^t)$ algorithm described above \emph{$(f,r,T)$-safe}
if, for each possible failure pattern $F \subseteq [n-1]$ with $|F|=f$,
 it holds that with probability at least $r$
the runtime of the protocol $\GP(\pi^1, \ldots, \pi^t)$ with failure pattern $F$
is at most $T$.\footnote{The probability statement in this definition is with respect to the random choice of the permutation index $i \in [t]$.}
\end{definition}

Interestingly, for any constant $d<1$ and for $t \in \omega(n/\log n)$, $t$ randomly chosen permutations $\pi^1, \ldots, \pi^t$ are
$(dn, 1-o(1), O(\log n))$-safe, with high probability.

\begin{theorem}
\label{thm:derandomized}
Let $t \in \omega(n/\log n)$.
Let $\pi^1, \ldots, \pi^t$ be taken from $S_{n-1}$ independently and uniformly at random.
Let $\varepsilon := \sqrt{\ln n/(n-1)}$,
$f < (n-1)(1-\varepsilon)$, and
$p:=1-\frac{f}{n-1}$.

There are $c=c(n)\le 6+o(1)$ and $\delta=\delta(n)$ with $\lim_{n\rightarrow \infty}\delta(n)=0$, such that the probability that
$\GP(\pi^1, \ldots, \pi^t)$ is
$(f, 1-\delta , \frac{c}{p-\varepsilon} (\lceil \log (n-1) \rceil +1))$-safe is $1-o(n^{-1})$.
\end{theorem}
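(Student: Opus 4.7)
The plan is to reduce to Theorem~\ref{thm:fnm}, which controls the behaviour of a \emph{single} random permutation, and then convert that per-permutation guarantee into a guarantee for all $t$ permutations by a Chernoff bound combined with a union bound over the $\binom{n-1}{f}\le 2^{n-1}$ adversarial failure patterns. Fix $c=6$, so that $T:=\frac{c}{p-\varepsilon}(\lceil\log(n-1)\rceil+1)$ matches the statement. By Theorem~\ref{thm:fnm}, for every fixed $F\subseteq[n-1]$ with $|F|=f$, a uniformly random $\pi\in S_{n-1}$ yields runtime exceeding $T$ with probability at most
\begin{equation*}
q=\frac{n^3}{n^2-1}\exp\bigl(-\tfrac{25}{12}(\lceil\log(n-1)\rceil-1)\bigr)=O(n^{-2}),
\end{equation*}
since $\tfrac{25}{12\ln 2}>2$. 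Because $\pi^1,\ldots,\pi^t$ are i.i.d.\ uniform on $S_{n-1}$, for each fixed $F$ the indicators $B_i^F:=\mathbf{1}[\pi^i\text{ gives runtime}>T\text{ under }F]$ are i.i.d.\ Bernoulli with parameter $q_F\le q$, so $X^F:=\sum_{i=1}^t B_i^F$ satisfies $\mathbb{E}[X^F]\le tq$.

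Next, write $t=nh(n)/\log n$ with $h\in\omega(1)$, and set $\delta:=\delta(n):=2\log n/h(n)$; note $\delta\to 0$ and $\delta t=2n$. For large $n$ we have $\delta\gg 2eq$, so Chernoff's bound~(\ref{cher:iii}) yields $\Pr[X^F>\delta t]\le 2^{-\delta t}=2^{-2n}$. A union bound over the at most $2^{n-1}$ failure patterns of size $f$ then gives
\begin{equation*}
\Pr\bigl[\exists F:\,X^F>\delta t\bigr]\le 2^{n-1}\cdot 2^{-2n}=2^{-n-1}=o(n^{-1}).
\end{equation*}
On the complementary event, every $F$ has at most a $\delta$-fraction of bad permutations, and so the uniformly random choice of $r\in[t]$ made by processor $0$ produces a good permutation with probability at least $1-\delta$. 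This is precisely $(f,1-\delta,T)$-safety, which establishes the theorem.

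The main obstacle is the simultaneous calibration of $c$, $\delta$, and $t$: the constant $c$ must be chosen large enough that the single-permutation failure probability $q$ beats the $2^{n-1}$ blow-up of the union bound, yet small enough to keep $T$ within the claimed $6+o(1)$ factor; meanwhile $\delta$ must stay above $2e q$ so that the strong Chernoff tail~(\ref{cher:iii}) applies, and still tend to zero, which forces $t\in\omega(n/\log n)$. Choosing $c=6$ gives exactly the critical exponent $\tfrac{25}{12\ln 2}>2$ that makes $q=O(n^{-2})$, and the $o(1)$ slack in ``$c\le 6+o(1)$'' comfortably absorbs the ceilings in $\lceil\log(n-1)\rceil$ and any small perturbation coming from the rate at which $h(n)\to\infty$.
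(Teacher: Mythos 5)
Your overall strategy matches the paper's: reduce via Theorem~\ref{thm:fnm} to a per-permutation failure probability $q=O(n^{-2})$, concentrate the number of ``bad'' permutations among $\pi^1,\ldots,\pi^t$ via a tail bound, and finish with a union bound over the at most $2^{n-1}$ failure sets. The constant $c=6$ and the qualitative outline are fine. But there is a genuine gap in the calibration of $\delta$ and $t$, caused by using the too-weak tail inequality~(\ref{cher:iii}).

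You set $t=nh(n)/\log n$ with $h\in\omega(1)$ and $\delta=2\log n/h(n)$, so that $\delta t=2n$ and Chernoff~(\ref{cher:iii}) gives $\Pr[X^F>\delta t]\le 2^{-2n}$, enough to beat the $2^{n-1}$ union bound. However, your assertion ``$\delta\to 0$'' is unjustified: $\delta=2\log n/h(n)\to 0$ only when $h(n)\in\omega(\log n)$, i.e., only when $t\in\omega(n)$. Under the theorem's actual hypothesis $t\in\omega(n/\log n)$ one may have, say, $h(n)=\sqrt{\log n}$, and then $\delta\to\infty$. The root cause is that inequality~(\ref{cher:iii}) yields a tail bound of the form $2^{-\text{threshold}}$; to beat $2^{n-1}$ you are then forced to take the threshold $\delta t\approx 2n$, which makes $\delta\not\to 0$ unless $t$ is larger than the theorem allows. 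The paper avoids this by bounding the binomial tail $\sum_{j\ge t_0}\binom{t}{j}q^j(1-q)^{t-j}$ directly via $\binom{t}{j}\le t^j/j!\le (et/j)^j$, which gives a geometric series with ratio $a=O(eq/\delta)=O(1/n)$ (using $q<n^{-2}$ and $\delta>e/n$), and hence a leading term $a^{t_0}\le n^{-t_0}=2^{-t_0\log n}$. This extra $\log n$ in the exponent is exactly what lets one take $t_0=\lceil\delta t\rceil$ as small as $2n/\log n$, so that $\delta>2/h(n)$ can still tend to $0$ for every $h\in\omega(1)$. To repair your argument you would need to replace~(\ref{cher:iii}) by the full multiplicative Chernoff tail $\Pr[X>(1+\eta)\E X]\le\bigl(e^\eta/(1+\eta)^{1+\eta}\bigr)^{\E X}\approx(eq/\delta)^{\delta t}$ (or by the paper's direct binomial estimate), take $\delta t=\Theta(n/\log n)$, and choose $\delta$ just above $e/n$ so that the base $eq/\delta$ stays below $1/n$.
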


The proof of Theorem~\ref{thm:derandomized} is based on Theorem~\ref{thm:fnm}: By that theorem we know that, for a fixed failure set $F$ and a random permutation $\pi$, the probability that the randomized $\GP$ algorithm along permutation $\pi$ and failure set $F$ exceeds the desired runtime $T$ is less than $n^{-c}$. Based on this, we show that the fraction of $\omega(n/\log n)$ randomly chosen permutations that exceed runtime $T$ is less than $\delta$, with probability exponentially small in $n$. A union bound over all possible failure patterns $F$ concludes the proof.

\begin{proof}[Proof of Theorem~\ref{thm:derandomized}]
   By the assumption on $t$ we have a function $\delta=\delta(n)$ satisfying $\delta \cdot t> 2n/\log n$ and $\lim_{n\rightarrow \infty} \delta(n)=0$.
 Select such $\delta$ satisfying also $\delta(n)>e/n$ .

We now define $c=c(n)>1$. Fix for the moment some failure set $F \subseteq [n-1]$ of size $|F|=f$.
For given $\pi^1\ldots\pi^t$, let $T^i$ be the time complexity of the $\GP$ algorithm along permutation $\pi^i$ if all processors in $F$ fail.
Since the index $i$ is chosen uniformly, the probability that the runtime of the $\GP$ algorithm with failure set $F$ exceeds
$$T:=\frac{c}{p-\varepsilon} \left(\lceil \log (n-1) \rceil +1 \right)$$
is the fraction of indices $i \in [t]$ with $T^i > T$.
By Theorem~\ref{thm:fnm} we know that,
for a random permutation $\sigma$ of $[n-1]$, the probability that the runtime
of the $\GP$ algorithm along permutation $\sigma$ and failure set $F$ exceeds $T$ is at most
\begin{align*}
q & :=
\frac{n^3}{n^2-1}\exp\left(-\frac{(c-1)^2}{2c} (\lceil \log(n-1) \rceil -1) \right)\\
& = n\cdot\exp\left(-\frac{(c-1)^2}{2c}\log n(1+o(1))\right).
\end{align*}
We select $c=c(n)$ such that $q<n^{-2}$.
Using the fact that $\log n = \ln(n)/\ln(2)\approx 1.44 \ln(n)$, it can be easily verified that $c$ can be chosen such that $c\leq 6+o(1)$ as claimed.

We now show that for this value of $c$,
the probability that the fraction of indices $i$ with $T^i>T$ is larger than $\delta$, is exponentially small.
To obtain the statement of the theorem, we will then do a union bound over all possible choices of $F$.

The probability that for the fixed $F$
and randomly chosen permutations $\sigma^1, \ldots, \sigma^t$
at least $t_0$ of them yield a runtime exceeding
$T$
is at most
\begin{align}
	\sum_{j=t_0}^t{\binom{t}{j} q^j (1-q)^{t-j}}
 <
	\sum_{j=t_0}^t{\frac{t^j}{j!} q^j}
 <
 	\sum_{j=t_0}^t{\left (\frac{etq}{j}\right)^j},
 \label{eq:jto}	
\end{align}
where the right inequality is by Stirling approximation for $j!$.

We now set $t_0:=\CEIL{\delta t}$, which, by definition of $\delta$ is at least $2n/\log n$.
By the definition of $\delta$ and $c$, the sum above is dominated by the sum of the geometric progression $(a^{t_0},a^{t_0+1},\ldots, a^t)$ for
$a \geq eqt/t_0 \approx eq\delta$.
Since $\delta > e/n$, the value of $a$ can be chosen to be less than $1/n$, for sufficiently large $n$.
Thus, the first element in this progression, $a^{t_0}$, is smaller than $n^{-2n/\log n}=2^{-2n}$. Hence, the sum in (\ref{eq:jto}) is smaller than $2\cdot 2 ^{-2n}$.

 We now do a union bound over all possible choices of $F$.
There are at most $\binom{n}{f} < 2^{n}$ different choices.
Therefore, the probability that
for $t$ randomly chosen permutations
there exits a choice of $F$ such that the corresponding runtime is larger than
$T$, is smaller than $2^n\cdot 2\cdot 2^{-2n}=2\cdot 2^{-n}$.
 \end{proof}
Note that the definition of $\delta$ in the above proof implies that if $t(n)>2n^2/\log n$ then $\delta$ in Theorem~\ref{thm:derandomized} can be set to $\delta(n)= e/n$.

\section{Removing the Need for Preprocessing}
\label{sec:kwiseind}

In this section we show that if we are willing to compromise for messages of size $\Theta(\log^2 n)$, then the need to distribute a  set of random permutations among all processors in a preprocessing time, upon system setup, can be eliminated. That is, we provide a rumor spreading algorithm using $O(\log n)$ time and exactly $n-1$ messages, each one of them being only {$O(\log^2n)$} bits larger than the true message to be distributed, even when an adversary may crash a linear fraction of nodes before the start of the protocol. Unlike our previous algorithms, this algorithm assumes an upper bound on the number of failed processors (the algorithm will be correct also when the number of failed processors is larger, but the time analysis will not hold). At the end of this section we show that this requirement can be removed at the price of increasing the number of bits appended to each message to $\Theta(h)$, where $h=h(n)$ is any function such that $h\in\omega(\log^2 n)$.

To overcome the additional space and communication requirements of the approaches elaborated in the previous two sections we introduce a small
set of permutations on $[n-1]$ such that (i) a random permutation in this {set} is {\em sufficiently close} to a fully random permutation, and (ii) each permutation $\pi$ in this {set} has a {\em short description} $r_\pi$ of $O(\log^2 n)$  bits such that for each $j\in[n-1]$ the image $\pi(j)$ can be computed efficiently from $r_\pi$.
This latter property is the one which saves the need for appending the explicit permutation to the rumor (as opposed to the strategy in Section~\ref{sec:fnm}), or storing a set of explicit permutations in a preprocessing phase (in contrast to the protocol presented in Section~\ref{sec:derandomize}). It is instead enough to append the bits $r_\pi$ to the messages carrying the rumor.

Informally, the property of our randomized protocol that facilitates the use of such sets of permutations is that in each execution of the protocol, the time required for the rumor to reach a given processor depends with high probability on the failure pattern of only a small number ($O(\log n)$) of other processors. This feature is formalized through the notion of {\em dissemination paths} defined below.

\paragraph{Dissemination Paths:} Let $\EXGP(F,\pi)$ be an execution of the GP protocol for a given set $F$ of failed processors and a given permutation $\pi\in S_{n-1}$, and let $\TGP(F,\pi)$ be the corresponding execution tree as in Figure~\ref{exec_tree}. Each node $u$ in $\TGP(F,\pi)$ represents an event in which some processor, $\sender(u)$, sends a request to another processor, $\target(u)$. For example, let $r$ be the root node. Then for all $F$,  $\sender(r)=0$ and $\target(r)=\pi(1)$.

\begin{definition}
 Let $F\subseteq[n-1]$ and $\pi\in S_{n-1}$ be as above. Let $\TGP(F,\pi)$ be the execution tree defined by $F$ and $\pi$. For a processor $j\in[n-1]$ and a permutation $\pi$, the {\em dissemination path of $j$ given $F$ and $\pi$}, $\DIS(j,F,\pi)$, is the path in $\TGP(F,\pi)$ from the root to the unique node $v$ for which $\target(v)=\pi(j)$.
 \end{definition}
Note that in this definition we do not consider the path to processor $j$ itself but to processor $\pi(j)$. The reason for this will become evident below (see the proof of Theorem~\ref{thm:A3}).
Let us first give an informal outline of our argument:
By Theorem~\ref{thm:fnm}, for each set $F$ of failed processors of size $cn$ (for some $c<1$), and for each $j\in[n-1]$, the length of $\DIS(j,F,\pi)$ is with high probability $O(\log n)$, where the probability is taken when $\pi$ is drawn uniformly from $S_{n-1}$. Suppose now that $Q$ is a distribution over $S_{n-1}$ which looks ``close to the uniform distribution'' for an observer who may inspect at most $k$ entries of a random permutation from~$Q$ (a notion which will be made precise soon).
It is enough to show that for any failure pattern $F$, the dissemination paths $\DIS(j,F,\pi)$, $j \in [n-1]$, are still with high probability of logarithmic length, when $\pi$ is drawn from~$Q$. We show this by proving that otherwise, an observer who checks the failure status of $O(\log n)$ processors will be able to tell apart whether the permutation $\pi$ underlying the protocol is drawn from $S_{n-1}$ uniformly or from~$Q$, contradicting our indistinguishability assumption on the distribution~$Q$.
In the remaining of this section we formalize this informal argument.

The ``close to uniform" distributions which we use are called  {\it adaptive $k$-wise $\delta$-dependent distribution}.
Specifically we use constructions of such distributions that are presented in~\cite{KaplanNR09} (we could also use earlier constructions of such
distributions such as in~\cite{NaorR99}).
For a detailed exposition of these distributions
and related results see~\cite{KaplanNR09} and references within.

\paragraph{Adaptive $k$-Wise $\delta$-Dependent Distributions:}
Let $P$ and $Q$ be two distributions defined over $S_{n}$. A {\em $k$-queries adaptive distinguisher} for $(P,Q)$ is a program (which may also use randomness) $D$ which gets as input a {\em description} of a permutation $\pi\in S_{n}$, and is allowed to {\em query} at most $k$ entries of $\pi$ in an adaptive way, that is: in query $i$, for $i=1,\dots,k-1,$  it specifies $j_i\in[n]$ and receives the value of $\pi(j_i)$, and then it determines the value of $j_{i+1}$ such that $\pi(j_{i+1})$ is queried  next.
Upon termination, $D$ outputs one or zero.
We say that {\em $D$ distinguishes between $P$ and $Q$ with advantage $\delta$} if
$|\Pr_{\pi\sim Q}[D(\pi)=1 ] - \Pr_{\pi\sim P}[D(\pi)=1]| \ge  \delta$,
 where the probability is taken over the distributions $P$   and $Q$ as well as the internal randomness of the program $D$.
We say that $P$ and $Q$ are {\em adaptively ($\delta ,k$)-indistinguishable} if there is no such distinguisher $D$. Finally, we say that $Q$ is {\em adaptive $k$-wise $\delta$-dependent} if it is adaptively ($\delta,k$)-indistinguishable from the uniform distribution over $S_n$.

 The distribution $Q$ above is {\em explicit} if each permutation $\pi$ with $Q(\pi)>0$ (i.e., $\pi$ is in the support of $Q$) is indexed by a bit-string $r_\pi$
 such that the function $j\mapsto\pi(j)$ can be computed efficiently from $r_\pi$ and $j$.
 The {\it description length} of $Q$ is the maximal length (number of bits) of $r_\pi$ where $\pi$ is in the
 support of $Q$.
 The {\it time complexity} of $Q$ is the worst-case time complexity of evaluating $\pi(j)$ from $j$ and $\pi$, for all $j\in [n]$  and $\pi$ in the support of $Q$ (see~\cite{KaplanNR09}).

\begin{theorem}\cite{KaplanNR09}
\label{thm:A1}
Let $n$ be a power of two, let $\delta>0$, and let $k\in [n]$ be given.
There exists a distribution $Q_n$ over $S_n$, such that $Q_n$ is adaptive $k$-wise $\delta$-dependent,
 has description length $O(k\log(n)+\log(1/\delta))$,
and time complexity that is polynomial in $\log n$, $k$, and $\log(1/\delta)$.
\end{theorem}
\noindent
{\bf Discussion:} Theorem 5.9 of~\cite{KaplanNR09} implies Theorem~\ref{thm:A1} above for a {\em static} distinguisher, i.e., a distinguisher which has to select at the beginning of the computation the $k$-tuple $(j_1,\ldots,j_k)$ for which $\pi(j_i)$, $i=1\ldots k$, are queried (see \cite[Definition 3.3]{KaplanNR09}). The generalization to adaptive distinguishers follows by the observation 
 that a static $(\delta/n^k,k)$-indistinguishability implies an adaptive $(\delta,k)$-indistinguishability~\cite{KaplanNR09}.
To see why this implication holds, consider an adaptive distinguisher~$D$, and assume without loss of generality that $D$ makes exactly $k$ queries in each execution. Define a static distinguisher $D_s$ which start by guessing at random the $k$ queries of~$D$, each with equal probability of
$1/{n\choose k}$
and then it simulates $D$. If the guess was correct (i.e., on the given input $D$ makes exactly the same $k$ queries), $D_s$ outputs what $D$ would have. It outputs $0$ otherwise. It is not hard to see that if $D$ outputs $1$ with probability~$\varepsilon$ then $D_s$ outputs~$1$ with
probability~${\varepsilon}/{{n\choose k}}$.
So if there is no static distinguisher for $Q$ with advantage  ${\delta}/{{n\choose k}}$,
then there is no adaptive distinguisher for~$Q$ with advantage~$\delta$.

\paragraph{The Protocol:}
We first assume that $n$, the number of processors, is such that $n-1$ is a power of two. Let $\GP_{\uniform}$ be our original randomized protocol, defined in Section~\ref{sec:fnm}. Let $\delta :=1/n^2$. If we have an upper bound of $cn$, $c\in [0,1)$, for the number of node failures we wish the protocol to tolerate, we choose $k=O(\log n)$ in such a way that Theorem~\ref{thm:fnm} guarantees that the running time of $\GP_{\uniform}$ with at most $cn$ failed processors is at most $k$ with probability at least $1-\delta$.\footnote{To ease the presentation of the algorithm we assume here that an upper bound $cn$ on the number of failed processors is known. In this case, Theorem~\ref{thm:fnm} suggests to choose $k=\frac{10 \log n}{1-c-\sqrt{\ln(n)/n}}$. Estimating an a prior bound on $c$ can be avoided by setting $k=\omega(\log n)$.}

Finally, let
 $Q=Q_{n-1}$ be the adaptive $k$-wise $\delta$-dependent
 distribution guaranteed by Theorem~\ref{thm:A1}, and let $\GP_{Q}$ be the protocol identical to $\GP_{\uniform}$ except that it selects a random permutation according to distribution $Q$.

We implement $\GP_Q$ with permutation $\pi$ by the following modification of the basic $\GP$ algorithm in Section~\ref{subsec:GP}: when a processor sends a todo-list (which is an arithmetic progression represented by $O(\log n)$ bits) to another processor, it appends to it $r_\pi$, the short representation of $\pi$ (this requires additional $(O(\log^2n)$ bits). As before, if the next entry on a processor's todo-list is $j$, then the processor sends a communication request to $\pi(j)$ and deletes $j$ from its list.

In the sequel we show that for any set $F\subseteq[n-1]$ of at most $cn$  processors, the running time of $\GP_Q$ given that $F$ is the set of failed processors is {at most} $k$ {with high probability}.

\begin{theorem}
\label{thm:A3}
Let $n,c,k,\delta$ be as above and let $F$ be an arbitrary set of at most $cn$ processors.
Then the execution of the $\GP_Q$ algorithm when the processors in $F$ are faulty, terminates after at most $k$ rounds
with probability at least $1-2n\delta = 1-2/n$.
Moreover, the size of each message in the protocol is only an
additive number of $O(k \log n)$ bits larger than the original message.
\end{theorem}

\begin{proof}
First note that the ``Moreover,...'' part follows since the description length of $Q$ is $O(k\log  n )$.

For the first part of the theorem note that, by the definition of $k$ and $\delta$, we have that when running the protocol $\GP_{\uniform}$, for each $j\in [n-1]$ the length of the path $\DIS(j,F,\pi)$ is larger than $k$ with probability smaller than $\delta$.

For each processor $j\in[n-1]$, let $D_j$ be an algorithm which, upon receiving a permutation~$\pi$ simulates the execution of $\GP$ along $\DIS(j,F,\pi)$ by starting from the root of $\TGP(F,\pi)$ and following the nodes $v$ for which the todo-lists of $\sender(v)$ contain $j$.
If the the length of $\DIS(j,F,\pi)$ is less than $k$ then the algorithm outputs $0$. Otherwise, if the algorithm reaches the $k$'th node on this path, then the algorithm stops and outputs $1$.
Note that $D_j$ queries at most $k$ entries of $\pi$: For each node $v$ in $\DIS(j,F,\pi)$, $D_j$
only need to query whether $\pi(\target(v))$ is faulty, and it stops after scanning at most $k$ nodes
(here we are using the fact that $\DIS(j,F,\pi)$ is the dissemination path to processor $\pi(j)$ rather than to processor $j$).

By the definition of $k$ and $\delta$, the following holds for any $j\in[n-1]$: when $D_j$ simulates $\GP_{\uniform}$ (i.e., when $\pi$ is completely random), then it outputs 1 with probability smaller than~$\delta$.
By the fact that $Q$ is adaptive $k$-wise $\delta$-dependent, we have that for every $j\in [n-1]$, when $D_j$ simulates $\GP_{Q}$ (i.e., $\pi$ is drawn from $Q$), $D_j$ outputs 1 with probability smaller than $2 \delta$ (otherwise $D_j$ would distinguish $Q$ from the uniform distribution on $S_n$ with advantage $\ge\delta$). In other words, when $\pi$ is drawn from $Q$, it holds for every $j\in[n-1]$ that the length of $\DIS(j,F,\pi)$ is greater than $k$ with probability at most 2$\delta$.

Using the union bound for all $j\in[n-1]$ we get that the probability that there exists an $j$ such that the length of $\DIS(j,F,\pi)$ is greater than $k$ is at most $2n \delta =2/n$. This implies that the probability that $\GP_{Q}$ terminates after more than $k$ rounds is at most $2/n$.
 \end{proof}

We extend the protocol $\GP_Q$ described above to arbitrary value of $n$ (recall that the construction so far only works when $n-1$ is a power of two). For general $n$, let $n'$ be the smallest integer with $n' \ge n$ and $n'-1$ is a power of two. We let the nodes $0, \ldots, n-1$ imitate the $\GP_Q$ protocol for $n'$ nodes, however, whenever a node would call a node $i \ge n$, it does nothing and continues to the next node on its list. A run of this protocol in the presence of failed nodes $F \subseteq [n-1]$ performs exactly identical to a run of the $\GP_Q$ protocol on $n'$ nodes with failed nodes $F' = F \cup \{n, \ldots, n'-1\}$, except that the nodes $n, \ldots n'-1$ do not receive a call. Consequently, this protocol uses exactly $n-1$ messages of size $O(\log^2 n') = O(\log^2n)$ and takes $O(\log n') = O(\log n)$ rounds---only the implicit constant in the runtime is slightly larger due to the larger failure rate $|F'|/n'$, which still satisfies
$|F'|/n' = (n'-n+|F|)/n'   < 1$.

Finally, if we aim at a protocol having an $O(\log n)$ runtime with high probability for any constant fraction $cn$, $c \in [0,1)$ of failed nodes (that is, a protocol that does not depend on a known upper bound on $c$, the number of failed processors ),
we choose  $k = \omega(\log n )$.
Note that this implies that for any constant fraction of failed processors, $c$, the protocol terminates in at most $O(\log n)$ rounds with high probability. It also implies that the description length of $Q$ is $\Theta(k\log n)=\omega(\log^2 n)$.
We thus have proven the following result.

\begin{theorem}\label{thm:sec6}
Let $n \in \N$ and let $h\in\omega(\log^2 n)$.
There exists a protocol $\GP_Q$ that disseminates a rumor from an initially informed node to all other $n-1$ nodes in the network with the following properties.
(i) Exactly $n-1$ messages are sent; each message contains the rumor and an additional $\Theta(h)$ number of bits.
(ii) With high probability, all nodes are informed after a logarithmic number of rounds.
(iii) These performance guarantees remain intact if an arbitrary set of $cn$ nodes is crashed before the beginning of the process, where $c<1$ is a constant.
Alternatively, if an upper bound on the number of crashed nodes is known, the overhead per message can be reduced to $O(\log^2 n)$ bits.
\end{theorem}

\section{Conclusions and Future Work}
We have studied randomized fault-tolerant rumor spreading algorithms in complete graph topologies. The protocols can tolerate up to $cn$ initial node failures, where $c$ is an arbitrarily large constant less than one.

Our algorithms are based on introducing randomization to the elegant whispering algorithm of~\cite{GasPelc96}.
They have minimal message complexity and asymptotically optimal time complexity, do not require synchronization or activation of uninformed processors, do not need to assume an a priori bound on the number of faulty processors, and do not need an opening phase.

We proved that the time complexity of the $\GP$ algorithm in the presence of random initial failures is asymptotically optimal, i.e., $O(\log n)$.
The analysis is based on a new \emph{random wakeup model} and a novel coupling technique, which could be of independent interest.
To deal with adversarial failures, we have proposed a \emph{randomized version} of the $\GP$ algorithm.
While the randomized $\GP$ algorithm achieves best possible message complexity and asymptotically optimal time complexity, it requires up to a linear number of additional bits that need to be communicated together with the original message.
We have shown two different ways of reducing this overhead. One is to add a preprocessing step for storing $O(n h(n)/\log n)$ random permutations at the processors ($h \in \omega(1)$ being an arbitrary function tending to infinity). This decreases the message overhead to
$O(\log n)$ bits. The preprocessing phase of this protocol can be eliminated by making use of adaptive $\log n$-wise almost independent permutations,
which can be encoded with $O(\log^2 n)$ bits. An interesting problem is whether it is possible to eliminate the need for preprocessing by adding overhead of only $O(\log n)$ bits per message.

Another interesting avenue for future work is the design of whispering protocols that tolerate \emph{any-time failures} (i.e., nodes must not be crashed initially but can stop spreading the rumor at an arbitrary point in time). We are not aware of any existing work in this direction.

\subsection*{Acknowledgments.}
The authors are happy to have received very helpful comments from anonymous reviewers, in particular the suggestion to use $k$-wise $\delta$-dependent distributions for the result presented in Section~\ref{sec:kwiseind}.

We would also like to thank Ariel Gabizon for very interesting discussions and for suggesting the use of adaptive distinguishers.

Part of this work have been done while Benjamin Doerr was with the Max Planck Institute for Informatics (MPII) in Saarbr\"ucken, Germany, Carola Doerr was with the MPII and the LIAFA, Universit\'e Paris Diderot (Paris 7), France, and Shlomo Moran was a visitor at the MPII.

Carola Doerr gratefully acknowledges support from a Feodor Lynen postdoctoral research fellowship of the Alexander von Humboldt Foundation and from the Agence Nationale de la Recherche under the project ANR-09-JCJC-0067-01.

Shlomo Moran is supported by the Bernard Elkin Chair in Computer Science.
}


\begin{thebibliography}{CHHKM12}

\bibitem[Bil95]{B95book}
Patrick Billingsley, \emph{Probability and measure}, Wiley Series in
  Probability and Mathematical Statistics, John Wiley \& Sons, Inc., New York,
  1995.

\bibitem[CHHKM12]{CHKM12}
Keren Censor-Hillel, Bernhard Haeupler, Jonathan~A. Kelner, and Petar
  Maymounkov, \emph{Global computation in a poorly connected world: fast rumor
  spreading with no dependence on conductance}, Proc. of the ACM Symposium on
  Theory of Computing (STOC), ACM, 2012, pp.~961--970.

\bibitem[DF11]{DoerrFouzReducing}
Benjamin Doerr and Mahmoud Fouz, \emph{Quasi-random rumor spreading: {Reducing}
  randomness can be costly}, Information Processing Letters \textbf{111}
  (2011), 227--230.

\bibitem[DGH{\etalchar{+}}88]{DGH+87}
Alan~J. Demers, Daniel~H. Greene, Carl Hauser, Wes Irish, John Larson, Scott
  Shenker, Howard~E. Sturgis, Daniel~C. Swinehart, and Douglas~B. Terry,
  \emph{Epidemic algorithms for replicated database maintenance}, Operating
  Systems Review \textbf{22} (1988), 8--32.

\bibitem[Doe11]{Doerr11bookchapter}
Benjamin Doerr, \emph{Analyzing randomized search heuristics: Tools from
  probability theory}, Theory of Randomized Search Heuristics (Anne Auger and
  Benjamin Doerr, eds.), World Scientific Publishing, 2011, pp.~1--20.

\bibitem[DP00]{DiksPelc00}
Krzysztof Diks and Andrzej Pelc, \emph{Optimal adaptive broadcasting with a
  bounded fraction of faulty nodes}, Algorithmica \textbf{28} (2000), 37--50.

\bibitem[DP09]{DubhashiP98}
Devdatt~P. Dubhashi and Alessandro Panconesi, \emph{Concentration of measure
  for the analysis of randomised algorithms}, Cambridge University Press, 2009.

\bibitem[FG85]{FG85}
Alan~M. Frieze and Geoffrey~R. Grimmett, \emph{The shortest-path problem for
  graphs with random arc-lengths}, Discrete Applied Mathematics \textbf{10}
  (1985), 57--77.

\bibitem[GP96]{GasPelc96}
Leszek Gasieniec and Andrzej Pelc, \emph{Adaptive broadcasting with faulty
  nodes}, Parallel Computing \textbf{22} (1996), 903--912, preliminary version
  available from
  \url{http://citeseer.ist.psu.edu/viewdoc/summary?doi=10.1.1.49.3838}.

\bibitem[KNR09]{KaplanNR09}
Eyal Kaplan, Moni Naor, and Omer Reingold, \emph{Derandomized constructions of
  {\it k}-wise (almost) independent permutations}, Algorithmica \textbf{55}
  (2009), 113--133.

\bibitem[KSSV00]{KSSV00}
Richard~M. Karp, Christian Schindelhauer, Scott Shenker, and Berthold
  V{\"o}cking, \emph{Randomized rumor spreading}, Proc. of the IEEE Symposium
  on Foundations of Computer Science (FOCS), IEEE, 2000, pp.~565--574.

\bibitem[LPS88]{RamGraphs88}
Alexander Lubotzky, Ralph Phillips, and Peter Sarnak, \emph{Ramanujan graphs},
  Combinatorica \textbf{8} (1988), 261--277 (English).

\bibitem[Lyn96]{Lynch96}
Nancy~A. Lynch, \emph{Distributed algorithms}, Morgan Kaufmann, 1996.

\bibitem[MAS08]{MAS08}
Damon Mosk-Aoyama and Devavrat Shah, \emph{Fast distributed algorithms for
  computing separable functions}, IEEE Transactions on Information Theory
  \textbf{54} (2008), 2997--3007.

\bibitem[MU05]{MitzenmacherU05}
Michael Mitzenmacher and Eli Upfal, \emph{Probability and computing: Randomized
  algorithms and probabilistic analysis}, Cambridge University Press, 2005.

\bibitem[NR99]{NaorR99}
Moni Naor and Omer Reingold, \emph{On the construction of pseudorandom
  permutations: Luby—rackoff revisited}, Journal of Cryptology \textbf{12}
  (1999), 29--66.

\bibitem[Pit87]{Pi87}
Boris Pittel, \emph{On spreading a rumor}, SIAM Journal on Applied Mathematics
  \textbf{47} (1987), 213--223.

\bibitem[RS98]{BallsBins98}
Martin Raab and Angelika Steger, \emph{``{B}alls into bins'' - a simple and
  tight analysis}, Proc. of the International Workshop on Randomization and
  Approximation Techniques in Computer Science (RANDOM), Springer, 1998,
  pp.~159--170.

\bibitem[TKM89]{TKM89}
Gadi Taubenfeld, Shmuel Katz, and Shlomo Moran, \emph{Initial failures in
  distributed computations}, International Journal of Parallel Programming
  \textbf{18} (1989), 255--276.

\bibitem[Upf94]{Upfal94}
Eli Upfal, \emph{Tolerating a linear number of faults in networks of bounded
  degree}, Information and Computation \textbf{115} (1994), 312 -- 320.

\bibitem[Yao77]{Yao77}
Andrew Chi-Chih Yao, \emph{Probabilistic computations: Toward a unified measure
  of complexity (extended abstract)}, Proc. of the IEEE Symposium on
  Foundations of Computer Science (FOCS), IEEE, 1977, pp.~222--227.

\end{thebibliography}
\newcommand{\etalchar}[1]{$^{#1}$}
\providecommand{\bysame}{\leavevmode\hbox to3em{\hrulefill}\thinspace}
\providecommand{\MR}{\relax\ifhmode\unskip\space\fi MR }
\providecommand{\MRhref}[2]{%
  \href{http://www.ams.org/mathscinet-getitem?mr=#1}{#2}
}
\providecommand{\href}[2]{#2}

\end{document}